\def \Z {\mathbb Z}
\def \R {\mathbb R}
\def \C {\mathbb C}
\def \N {\mathbb N}
\def\dd{\mathrm d}
\newcommand{\SpT}{\mathrm{Sp}_{3}(\R)\,}
\newcommand{\SpN}{\mathrm{Sp}_{\mathrm{N}}(\R)\,}
\newcommand{\supp}{\mathrm{supp}\,}
\newcommand{\clz}{\mathrm{Cl}_{\mathrm{Z}}(G(E))}
\newcommand{\gt}{\mathfrak{g}_{3}(E)}
\newcommand{\spt}{\mathfrak{sp}_{3}(\R)}
\newcommand{\gtt}{\tilde{\mathfrak{g}}_{3}(E)}
\newtheorem{thm}{Theorem}
\newtheorem{prop}{Proposition}
\newtheorem{Def}{Definition}
\newtheorem{lem}{Lemma}
\begin{document}
       
\title{A matrix-valued point interactions model}
\author{Hakim Boumaza} 
\email{boumaza@math.keio.ac.jp}
\address{Keio University, Department of Mathematics\\
Hiyoshi 3-14-1, Kohoku-ku 223-8522\\
Yokohama, Japan\\}
\thanks{The author is supported by JSPS Grant P07728}
\begin{abstract}
We study a matrix-valued Schr\"odinger operator with random point interactions. We prove the absence of absolutely continuous spectrum for this operator by proving that away from a discrete set its Lyapunov exponents do not vanish. For this we use a criterion by Gol'dsheid and Margulis and we prove the Zariski denseness, in the symplectic group, of the group generated by the transfer matrices. Then we prove estimates on the transfer matrices which lead to the H\"older continuity of the Lyapunov exponents. After proving the existence of the integrated density of states of the operator, we also prove its H\"older continuity by proving a Thouless formula which links the integrated density of states to the sum of the positive Lyapunov exponents.
\end{abstract}

\keywords{Point interactions, Lyapunov exponents, Integrated density of states.}

\maketitle
\section{Introduction : model and results}

\noindent This paper discusses properties of the Lyapunov exponents and the of integrated density of states of the following formal Schr\"odinger operator with random point interactions,
\begin{equation}\label{formalop}
H_P(\omega)=-\frac{\mathrm{d}^{2}}{\mathrm{d} x^{2}}I_{\mathrm{N}}+ V_0 +\sum_{n\in \Z} \left(
\begin{array}{ccc}
c_1 \omega_{1}^{(n)} \delta_{0}(x-n) & & 0 \\
& \ddots & \\
0 & &c_N \omega_{N}^{(n)} \delta_{0}(x-n)
\end{array}\right)
\end{equation}
acting on $L^2(\R)\otimes \C^N$. Here, $V_0$ is the maximal multiplication operator by the constant coefficient matrix with $0$ on its diagonal, $1$ on the upper and lower diagonals and $0$ everywhere else. Moreover, $c_1,\ldots,c_N$ are real numbers, $\delta_0$ is the Dirac distribution at the point $0$ and $I_{\mathrm{N}}$ is the identity matrix of order $N$, $N\geq 1$. Let $(\Omega, \mathcal{A}, \mathsf{P})$ be a complete probability space on which we define the sequence $(\omega^{(n)})_{n\in \Z}=((\omega_1^{(n)},\ldots,\omega_N^{(n)}))_{n\in \Z}$ of independent and identically distributed (\emph{i.i.d.}) $\R^N$-valued random variables of common distribution $\nu$. We assume that the support of $\nu$, $\supp \nu \subset \R^N$, is bounded and that $\{x-y\ |\ x,y\in \supp \nu \}$ spans $\R^N$. An example of such a distribution is obtained when the components $\omega_1^{(n)},\ldots,\omega_N^{(n)}$ are \emph{i.i.d.} real-valued random variables such that the support of their common distribution contains at least two different points (like in the case of Bernoulli variables). We also set $(\tilde{\Omega}, \tilde{\mathcal{A}}, \tilde{\mathsf{P}})=\bigotimes_{n\in \Z} (\Omega, \mathcal{A}, \mathsf{P})$.

\noindent Following \cite{albeverio}, we define the self-adjoint realization of the formal operator $H_P(\omega)$, for every $\omega \in \tilde{\Omega}$,  by
\begin{equation}\label{opH}
H(\omega)=\bigoplus_{i=1}^{N} H_{\omega_i} + V_0
\end{equation}
acting on $L^2(\R)\otimes \C^N$. Each $H_{\omega_i}$ for $i\in \{1,\ldots ,N\}$ is the operator acting on $L^2(\R)\otimes \C$ by $H_{\omega_i}f=-f''$ whose domain is 
\begin{eqnarray}\label{Hoidomain}
D(H_{\omega_i}) & = & \{ f\in L^2(\R)\otimes \C\ |\  f,f'\ \mathrm{are}\ \mathrm{absolutely}\ \mathrm{continuous}\ \mathrm{on}\ \R\setminus \Z,  \nonumber \\
& & f'' \in L^2(\R)\otimes \C, f \ \mathrm{is}\ \mathrm{continuous} \ \mathrm{on}\ \R, \nonumber \\
& & f'(n^+)=f'(n^-) + c_i \omega_i^{(n)} f(n)\ \mathrm{for}\ \mathrm{every}\ n\in\Z\}, \nonumber
\end{eqnarray}
where existence of the left and right derivatives $f'(n^-)$ and $f'(n^+)$ at all integers is assumed. The $\omega_i$'s, $V_0$ and the $c_i$'s are as above. These operators are self-adjoint and bounded from below (see \cite{albeverio}) as the support of the distribution $\nu$ is bounded. Thus, $V_0$ being bounded and self-adjoint, it implies that $H(\omega)$ is self-adjoint for every $\omega \in \tilde{\Omega}$. Moreover, $H(\omega)$ is a $\Z$-ergodic operator and therefore its almost-sure spectrum is included in $\R$.

\noindent The operator $H(\omega)$ is the Hamiltonian which describe the propagation of an electron in $N$ coupled one-dimensional layers. The random variables are describing point sources that interact with the electron. These point sources are situated at each integer point of the system.

\noindent In the case $N=2$, a previous work from G\"unter Stolz and the author (see \cite{stolzboumaza}) stated that $H(\omega)$ has almost-surely no absolutely continuous spectrum. We now look at the case $N=3$ which is our first result on model $(\ref{opH})$.

\begin{thm}\label{th1}
We assume that $N=3$ and that $c_2\in \R$, $c_1$ and $c_2$ are non-zero. Then there exists a discrete set $\mathcal{S} \subset \R$ such that for every $E\in \R \setminus \mathcal{S}$, the Lyapunov exponents associated to $H(\omega)$ at $E$ are all distinct and positive,
$$\gamma_1(E) > \gamma_2(E) > \gamma_3(E) >0.$$
Thus, $H(\omega)$ has almost-surely no absolutely continuous spectrum.
\end{thm}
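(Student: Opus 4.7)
Following the scheme announced in the abstract, I would first set up the transfer matrix formalism for $H(\omega)$. On each interval $(n, n+1)$ the equation $H(\omega)\Psi = E\Psi$ reduces to $-\Psi'' + V_0 \Psi = E\Psi$; diagonalizing the constant tridiagonal matrix $V_0$ yields an explicit free propagation matrix $M_E \in \SpT$ acting on the phase vector $(\Psi,\Psi')^{T} \in \R^{6}$. The jump conditions at each integer produce a ``kick'' matrix $J(\omega^{(n)}) \in \SpT$ depending linearly on $(c_1\omega_1^{(n)}, c_2\omega_2^{(n)}, c_3\omega_3^{(n)})$, so that the transfer matrix from $n$ to $n+1$ factors as $A_E^{\omega^{(n)}} = M_E\, J(\omega^{(n)})$.

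The key step is to show that the closed subgroup $G_{\mu_E}$ of $\SpT$ generated by $\{A_E^{\omega} : \omega \in \supp \nu\}$ is Zariski dense for every $E$ outside a discrete set $\mathcal{S}$. My plan is to use the spanning hypothesis on $\supp \nu - \supp \nu$ to extract three independent infinitesimal kick directions from products of the form $A_E^{\omega}(A_E^{\omega'})^{-1} = M_E\, J(\omega - \omega')\, M_E^{-1}$, then conjugate the associated one-parameter subgroups of $\SpT$ by powers of $M_E$ and take iterated Lie brackets inside $\spt$. Provided $M_E$ has distinct eigenvalues and no exceptional multiplicative resonances among them, the resulting Lie algebra should fill out $\spt$, forcing the algebraic closure of $G_{\mu_E}$ to be the whole $\SpT$. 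The set of energies where $M_E$ fails to be generic in this sense is discrete thanks to the analyticity of $E \mapsto M_E$; this is also where the hypothesis that the relevant $c_i$'s are non-zero is used, to guarantee that enough kick directions are effectively activated.

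Once Zariski denseness is established, the Gol'dsheid--Margulis criterion yields the simplicity and strict positivity of the Lyapunov spectrum, $\gamma_1(E) > \gamma_2(E) > \gamma_3(E) > 0$, for all $E \in \R \setminus \mathcal{S}$. Since $\mathcal{S}$ has zero Lebesgue measure, matrix-valued Kotani theory (in the form due to Kotani--Simon) then forces $\sigma_{\mathrm{ac}}(H(\omega)) = \emptyset$ almost surely. The main obstacle throughout is the Zariski denseness step: pinning down the precise exceptional set $\mathcal{S}$ and verifying that the Lie algebra generated by the conjugated kicks fills $\spt$ away from $\mathcal{S}$ requires a careful explicit computation in a suitable basis of $\spt$, and I expect this to be the technically heaviest part of the argument.
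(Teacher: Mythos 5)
Your proposal follows essentially the same route as the paper: factor the transfer matrix into a deterministic free-propagation part and a random unipotent kick, use the spanning hypothesis to obtain the infinitesimal kick directions in the Lie algebra of the Zariski closure, conjugate by powers of the free-propagation matrix and take Lie brackets to fill $\spt$ off a discrete set of energies (controlled by analyticity), then invoke Gol'dsheid--Margulis and Kotani--Simon. The one thing you rightly flag as outstanding — explicitly verifying that the conjugated kicks and their brackets span all $21$ dimensions of $\spt$ and identifying $\mathcal{S}$ via the vanishing locus of the resulting determinants — is precisely where the paper devotes the bulk of the proof (explicit $8\times 8$ and $13\times 13$ determinant computations), so your sketch is a correct plan rather than a completed argument.
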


\noindent The approach to prove this theorem will be to study the F\"urstenberg group $G(E)$ associated to every real number $E$ to $H(\omega)$. That is, the group generated by the transfer matrices associated to $H(\omega)$. We will recall precise definitions and properties of the transfer matrices, the F\"urstenberg group and the Lyapunov exponents in Section \ref{sec:lyapprop}. Let $\SpN$ denote the group of $2N\times 2N$ symplectic matrices. The F\"urstenberg group $G(E)$ is a subgroup of $\SpN$. For the definitions of $L_p$-strong irreducibility and $p$-contractivity we refer to \cite{bougerol}, definitions $A.IV.3.3$ and $A.IV.1.1$, respectively. Then the proof of theorem \ref{th1} is based upon the following criterion due to Gol'dsheid and Margulis.

\begin{thm}[\cite{GM89}]\label{thmGM}
If the group $G(E)$ is Zariski dense in $\SpN$ then it is $p$-contracting and $L_p$-strongly irreducible for every $p\in \{1,\ldots,N \}$. Thus the Lyapunov exponents associated to $H(\omega)$ are distinct and positive.
\end{thm}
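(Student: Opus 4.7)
The proof splits into two conceptually independent pieces: a reduction from the Lyapunov statement to the two algebraic properties (via a theorem of Guivarc'h--Raugi), and the derivation of those properties from the Zariski density hypothesis. For the reduction, Guivarc'h and Raugi have shown that for each $p$ the strict gap $\gamma_p > \gamma_{p+1}$ in the Lyapunov spectrum of a random walk on $\mathrm{GL}_d(\R)$ holds if and only if the driving group is $L_p$-strongly irreducible and $p$-contracting in its action on $\Lambda^p\R^d$. Applying this at each $p\in\{1,\ldots,N\}$, and combining it with the symplectic symmetry $\gamma_{2N+1-i}=-\gamma_i$ of the spectrum (so that the gap at $p=N$ amounts exactly to $\gamma_N>0$), yields the claimed ordering $\gamma_1>\cdots>\gamma_N>0$. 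Thus the task reduces to extracting $p$-contractivity and $L_p$-strong irreducibility from the Zariski density hypothesis.

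For $L_p$-strong irreducibility I would argue by contradiction. If $G(E)$ preserved a finite union of proper projective subspaces inside a non-trivial irreducible summand of the representation $\Lambda^p\R^{2N}$ of \SpN, then its stabilizer in \SpN would be a proper Zariski-closed subgroup. Zariski density of $G(E)$ would force this stabilizer to equal \SpN itself, which is impossible since \SpN acts irreducibly on each of the fundamental summands of $\Lambda^p\R^{2N}$ (these being the fundamental representations obtained by repeated contraction of $\Lambda^p$ with the symplectic form).

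For $p$-contractivity it suffices to exhibit a single element $g\in G(E)$ whose action on $\Lambda^p\R^{2N}$ is proximal --- that is, admits a unique eigenvalue of maximal modulus with algebraic multiplicity one --- so that its iterates $g^n$ realise the required contracting sequence. Cartan elements of \SpN of the form $\mathrm{diag}(\mu_1,\ldots,\mu_N,\mu_1^{-1},\ldots,\mu_N^{-1})$ with $\mu_1>\mu_2>\cdots>\mu_N>1$ are simultaneously $p$-proximal for every $p\in\{1,\ldots,N\}$, and the set of such $\R$-regular elements is a non-empty Zariski-open subset of \SpN; Zariski density of $G(E)$ then guarantees a non-empty intersection.

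The main obstacle is the Guivarc'h--Raugi reduction, which is genuinely analytic: it rests on a Furstenberg-type formula expressing $\gamma_1+\cdots+\gamma_p$ as the integral of a log-cocycle against the unique invariant probability measure on the relevant Grassmannian, together with a martingale and positivity argument ensuring that the top Lyapunov exponent on $\Lambda^p$ is simple precisely under $p$-contractivity and $L_p$-strong irreducibility. The algebraic steps above (decomposition of $\Lambda^p$ as an \SpN-module and the existence and Zariski-openness of $\R$-regular Cartan elements) are, by comparison, classical consequences of the structure theory of the semisimple Lie group \SpN.
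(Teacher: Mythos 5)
The paper offers no proof of this theorem; it is stated as a citation to Gol'dsheid--Margulis \cite{GM89}, so there is no proof in the paper to compare against. The only thing to evaluate is whether your sketch is correct on its own terms.

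Your architecture is right: reduce the Lyapunov gaps to the two algebraic properties via the Furstenberg/Guivarc'h--Raugi criterion, then deduce those properties from Zariski density. (Note that the criterion is a \emph{sufficient} condition for the gap $\gamma_p>\gamma_{p+1}$, not an ``if and only if''; the converse fails in general, but is also not needed here.) Your argument for $L_p$-strong irreducibility is essentially sound, modulo the usual passage to a finite-index subgroup to deal with finite unions of subspaces, which works because $\SpN$ is connected and therefore has no proper finite-index algebraic subgroups; the Zariski closure of the finite-index subgroup is still all of $\SpN$, contradicting irreducibility of the fundamental summands of $\Lambda^p\R^{2N}$.

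The $p$-contractivity step, however, has a genuine gap. You assert that the set of $\R$-regular elements of $\SpN$ (conjugates of $\mathrm{diag}(\mu_1,\ldots,\mu_N,\mu_1^{-1},\ldots,\mu_N^{-1})$ with $\mu_1>\cdots>\mu_N>1$) is a nonempty Zariski-open subset, so that Zariski density alone forces $G(E)$ to meet it. That is false: $\R$-regularity is cut out by strict inequalities among moduli of eigenvalues, a semi-algebraic condition, not a polynomial one. Its complement is Zariski dense (it already contains every elliptic element of $\SpN$), so the $\R$-regular locus is not Zariski open and no cheap openness argument is available. The statement that a Zariski-dense subgroup of a connected semisimple real Lie group without compact factors contains an $\R$-regular --- hence proximal --- element is precisely the nontrivial core of \cite{GM89}, with independent proofs later given by Prasad and by Benoist--Labourie. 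Describing it as a ``classical consequence of the structure theory of $\SpN$'' inverts the actual difficulty: the hard work in \cite{GM89} lies exactly here, not in the Guivarc'h--Raugi reduction, which was already available in the literature.
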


\noindent Once we have proved the Zariski denseness of the F\"urstenberg group, by Kotani's theory (see \cite{KS88}) we will obtain the almost-surely absence of absolutely continuous spectrum of $H(\omega)$. Actually \cite{KS88} does not cover directly the case of point interactions models as (\ref{opH}) but we will see in Section \ref{sec:proofth4} how to adapt proofs of \cite{KS88} to our setting.
\vskip 2mm

\noindent Our second result will be on the regularity of the Lyapunov exponents viewed as functions of the real parameter $E$. We first prove a general result of H\"older continuity of the Lyapunov exponents associated to $H(\omega)$ under suitable assumptions on $G(E)$.

\begin{thm}\label{th2}
Let $I$ be a compact interval in $\R$. We assume that the group $G(E)$ is $p$-contracting and $L_{p}$-strongly irreducible for every $p\in \{1,\ldots ,N\}$ and all $E\in I$. Then the Lyapunov exponents associated to $H(\omega)$ are H\"older continuous on $I$, \emph{i.e}, there exist two real numbers $\alpha >0$ and $0<C<+\infty$ such that 
$$\forall p\in \{1,\ldots N\},\ \forall E,E' \in I,\ |\gamma_{p}(E)-\gamma_{p}(E')| \leq C |E-E'|^{\alpha}.$$
\end{thm}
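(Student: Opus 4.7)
The plan is to reduce Theorem \ref{th2} to a general Hölder regularity theorem of Le Page type for Lyapunov exponents of products of independent identically distributed symplectic matrices, as formulated for instance in \cite{bougerol} (Theorem V.4.9 and the discussion of its extension to exterior powers).

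First, I would write down explicitly the one-period transfer matrix $T^E_n(\omega)\in\SpN$ associated to (\ref{opH}) as the product of a free propagator solving $-f''+V_0 f = Ef$ on $[n,n+1]$, which is an entire $\SpN$-valued function of $E$, times a jump matrix encoding the matching condition at the integer $n$, which is affine in $\omega^{(n)}$ and polynomial in $E$. Because $\supp \nu$ is bounded and $E$ ranges in the compact interval $I$, the family $\{T^E_n(\omega) : E\in I,\ \omega\in\tilde\Omega\}$ lies in a compact subset of $\SpN$, and $E\mapsto T^E_n(\omega)$ is Lipschitz with a constant independent of $\omega$.

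Let $\mu_E$ denote the distribution of $T^E_n$ on $\SpN$. The previous step shows that $E\mapsto \mu_E$ is Lipschitz from $I$ into the space of compactly supported probability measures on $\SpN$ endowed with the $1$-Wasserstein distance. Since by hypothesis the group $G(E)$ generated by $\supp \mu_E$ is $p$-contracting and $L_p$-strongly irreducible for every $p\in\{1,\ldots,N\}$ and every $E\in I$, and the uniform boundedness of $\supp \mu_E$ supplies any required moment condition, the Le Page/Bougerol--Lacroix theorem applies at each $E_0\in I$ and yields a neighborhood $U(E_0)\subset I$ together with constants $C(E_0),\alpha(E_0)>0$ such that
$$|\gamma_p(E)-\gamma_p(E')|\;\leq\; C(E_0)\,|E-E'|^{\alpha(E_0)}$$
for all $E,E'\in U(E_0)$ and all $p\in\{1,\ldots,N\}$. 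Covering the compact interval $I$ by finitely many such neighborhoods and keeping the smallest exponent then produces uniform constants $C,\alpha>0$ valid on all of $I$, which is the desired conclusion.

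The main obstacle is the local uniformity in $E$ of the ingredients going into Le Page's theorem: one must ensure that the spectral gap of the Markov operator acting on continuous functions on the appropriate projective spaces $\P(\wedge^p \R^{2N})$, as well as the $p$-contractivity and $L_p$-strong irreducibility constants, vary at least upper-semicontinuously with $E$. The continuity (in fact analyticity) of the transfer matrices in $E$ established in the first step, combined with the standing assumption that the hypotheses of $p$-contractivity and $L_p$-strong irreducibility hold at \emph{every} $E\in I$, is precisely what makes this local uniformity, and hence the final compactness argument on $I$, go through.
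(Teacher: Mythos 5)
Your proposal follows the same strategy as the paper: establish uniform boundedness of $\|\wedge^p A^{\omega^{(n)}}_{(n,n+1]}(E)\|$ over $E\in I$ and $\omega$, establish Lipschitz dependence of the transfer matrices on $E$, and then feed these estimates together with the $p$-contractivity and $L_p$-strong irreducibility hypotheses into a Le~Page-type H\"older regularity theorem for Lyapunov exponents of i.i.d.\ symplectic products. The factorization you use, free propagator times jump matrix, is exactly the paper's decomposition $A_{(n,n+1]}^{\omega^{(n)}}(E)=M(\mathrm{diag}(c_1\omega_1^{(n)},\dots,c_N\omega_N^{(n)}))\,A_{(0,1)}(E)$, and your boundedness and Lipschitz claims correspond to estimates (\ref{estim1}) and (\ref{estim2}) of Proposition~\ref{mat_transfer_estim}.

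The difference lies in how the reduction to the abstract regularity theorem is executed. The paper invokes Theorem~1 of \cite{boumazarmp} directly: that result takes precisely the two uniform estimates plus the group hypotheses on a compact interval $I$ and returns H\"older continuity on all of $I$, with the quantitative spectral-gap analysis of the Markov operators on $\P(\wedge^p\R^{2N})$ already done internally. You instead apply a pointwise Le~Page theorem from \cite{bougerol} around each $E_0\in I$ and then cover $I$ by finitely many such neighborhoods. This is workable in principle, but the step you flag as the ``main obstacle'' --- that the spectral gap and contraction constants entering Le~Page's argument do not degenerate as $E$ moves over $I$ --- is not discharged merely by the Lipschitz continuity of $E\mapsto\mu_E$ and the fact that $G(E)$ satisfies the hypotheses at every single $E$: continuity of the measure does not by itself yield a lower bound on the spectral gap of the associated transfer operators that is locally uniform, and without that bound the neighborhoods $U(E_0)$ produced by a purely pointwise application of Le~Page could a priori shrink to zero radius. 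Establishing this locally uniform control is precisely the technical content of Theorem~1 in \cite{boumazarmp}, so to make your version self-contained you would either need to reproduce that analysis or to cite a version of Le~Page's theorem stated for parametrized families with the uniformity made explicit. In short, you have correctly identified the missing piece but asserted rather than supplied it; the paper closes the argument by appealing to a theorem whose hypotheses are precisely (\ref{estim1}), (\ref{estim2}), and the group conditions, and whose conclusion is the uniform H\"older bound on $I$.
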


The proof of theorem \ref{th2} is mostly based upon the existence of an integral representation of the Lyapunov exponents under the assumptions of $p$-contractivity and $L_p$-strong irreducibility of $G(E)$. It will also require estimates on the transfer matrices that will be proved in Section \ref{sec:regularlyap}. Combining theorem \ref{th1} and theorem \ref{th2} we also obtain,

\begin{thm}\label{th3}
We assume that $N=3$ and that $c_2\in \R$, $c_1$ and $c_2$ are non-zero. Let $\mathcal{S}$ be the discrete subset of $\R$ obtained in theorem \ref{th1}. Let $I\subset \R \setminus \mathcal{S}$ be a compact interval. Then the Lyapunov exponents associated to $H(\omega)$ are H\"older continuous on $I$.
\end{thm}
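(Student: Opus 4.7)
The proof of Theorem \ref{th3} is essentially a syllogism assembling Theorems \ref{th1} and \ref{th2}, so the main task is to verify that the hypotheses of Theorem \ref{th2} are available on every compact $I \subset \R \setminus \mathcal{S}$. My plan is as follows.

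First, I would revisit what is actually established in the proof of Theorem \ref{th1}. The statement as given only records the conclusion about the Lyapunov exponents being distinct and positive, but the route to that conclusion passes through Gol'dsheid--Margulis (Theorem \ref{thmGM}): one shows that for $E \in \R \setminus \mathcal{S}$ the F\"urstenberg group $G(E)$ is Zariski dense in $\SpT$. I would therefore begin the proof by invoking the Zariski-density statement that underlies Theorem \ref{th1} rather than Theorem \ref{th1} itself. Applying Theorem \ref{thmGM}, Zariski density of $G(E)$ in $\SpT$ implies that $G(E)$ is $p$-contracting and $L_p$-strongly irreducible for every $p \in \{1,2,3\}$.

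Next, given a compact interval $I \subset \R \setminus \mathcal{S}$, the previous step yields $p$-contractivity and $L_p$-strong irreducibility of $G(E)$ for every $E \in I$ and every $p \in \{1,2,3\}$. This is exactly the hypothesis of Theorem \ref{th2} with $N=3$. Applying Theorem \ref{th2} produces constants $\alpha > 0$ and $0 < C < +\infty$ such that
$$\forall p \in \{1,2,3\},\ \forall E,E' \in I,\ |\gamma_p(E) - \gamma_p(E')| \leq C |E-E'|^{\alpha},$$
which is the desired H\"older continuity.

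The one genuinely non-trivial point is the passage from the pointwise information provided by Theorem \ref{th1} (equivalently, its proof) to a statement on a whole compact interval. Since the set $\mathcal{S}$ is discrete, any compact $I \subset \R \setminus \mathcal{S}$ is bounded away from $\mathcal{S}$, so the hypothesis of Theorem \ref{th2} indeed holds at \emph{every} $E \in I$, not merely almost everywhere. Assuming, as stated, that Theorem \ref{th2} requires only pointwise $p$-contractivity and $L_p$-strong irreducibility on $I$ (the uniformity needed for H\"older regularity being internal to its proof), no further work is required and the combination is immediate. The only step where one should be careful is to confirm that the Zariski-density argument establishing Theorem \ref{th1} genuinely proceeds $E$-by-$E$ on $\R \setminus \mathcal{S}$, so that Theorem \ref{thmGM} can be invoked uniformly in $E \in I$; if the construction of $\mathcal{S}$ is precisely the exceptional set where Zariski density could fail, this is automatic.
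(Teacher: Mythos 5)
Your proposal matches the paper's own argument: the paper derives Theorem~\ref{th3} precisely by noting that Proposition~\ref{zariski_f_group} (the Zariski-density result underlying Theorem~\ref{th1}) combined with Theorem~\ref{thmGM} yields $p$-contractivity and $L_p$-strong irreducibility of $G(E)$ for every $E\in\R\setminus\mathcal{S}$, and then applies Theorem~\ref{th2} on any compact $I\subset\R\setminus\mathcal{S}$. Your remark that the Zariski-density argument is genuinely pointwise in $E$, so the hypothesis of Theorem~\ref{th2} holds at every point of $I$, is exactly the (implicit) justification in the paper.
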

\vskip 2mm

\noindent Finally, in Section \ref{sec:regularids}, we turn to the study of the integrated density of states. It is a function of one real variable that counts the mean number per unit volume of spectral values of $H(\omega)$ below a fixed real number $E$. For operators like $H(\omega)$, acting on an infinite dimensional Hilbert space, this function could be equal to infinity for every real number $E$. To avoid this problem, we will first restrict the operator to intervals of $\R$ of finite length. Let $L$ be a strictly positive integer and $D=[-L,L]\subset \R$. We set $H^{(D)}(\omega)$ the restriction of $H(\omega)$ acting on $L^{2}(D)\otimes \C^N$ with Dirichlet boundary conditions on $D$. 

\begin{Def}
The integrated density of states, or IDS, associated to $H(\omega)$ is the function $N:\R \to \R^+$ defined for each $E\in \R$ as the following thermodynamical limit,
\begin{equation}\label{IDSdefi}
N(E)=\lim_{L\to +\infty} \frac{1}{2L} \# \{\lambda \leq E\ |\ \lambda \in \sigma(H^{(D)}(\omega)) \}.
\end{equation}
\end{Def}

\noindent In Section \ref{sec:existids} we will start by proving the existence of this limit. For this, we will have to prove a matrix-valued Feynman-Kac formula for matrix-valued point interactions models. To prove the existence of the IDS for every $E \in \R$, we will adapt ideas of \cite{FM2003} for scalar-valued point interactions models combined with what was done in \cite{boumazarmp} for matrix-valued Anderson models. Then in Section \ref{sec:proofth4} we deduce a Thouless formula by adapting Kotani's theory to point interactions models. This formula states that the sum of positive Lyapunov exponents and the integrated density of states are harmonically conjugated. Combining this Thouless formula, properties of the Hilbert transform and theorem \ref{th2}, we prove the following result.

\begin{thm}\label{th4}
Let $I$ be a compact interval in $\R$. We assume that the group $G(E)$ is $p$-contracting and $L_{p}$-strongly irreducible for every $p\in \{1,\ldots ,N\}$ and all $E\in I$. Then the integrated density of states associated to $H(\omega)$ is H\"older continuous on $I$.
\end{thm}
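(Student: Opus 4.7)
The plan, as outlined in the introduction, is to combine the Thouless formula with Theorem~\ref{th2} through a standard boundary-value argument for holomorphic functions on the upper half-plane. In Section~\ref{sec:proofth4} an adaptation of Kotani's theory to the matrix-valued point interactions setting gives the Thouless formula: after a regularization handling the polynomial growth of $N$ at infinity, the sum of the positive Lyapunov exponents $L(E):=\sum_{p=1}^{N}\gamma_{p}(E)$ coincides with the logarithmic potential $\int_{\R}\log|E-E'|\,dN(E')$ of the density of states measure. Consequently $L$ is the real part along the real axis of a holomorphic function $F$ defined on the upper half-plane whose imaginary part along the axis equals $-\pi N(E)$ up to an additive constant; that is, $L$ and $-\pi N$ are boundary values of conjugate harmonic functions.

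Under the hypotheses of the theorem, Theorem~\ref{th2} applies and yields that each $\gamma_{p}$, hence $L$, is H\"older continuous on $I$ with some exponent $\alpha \in (0,1)$. It then remains to transfer this regularity from $\mathrm{Re}\,F|_{\R}=L$ to $\mathrm{Im}\,F|_{\R}$, which up to a constant is $-\pi N$. The tool is the classical fact that the local Hilbert transform is bounded on the H\"older space $C^{\alpha}$ for every $\alpha\in(0,1)$. Concretely, I would fix a slightly larger compact interval $I'\supset I$, decompose $dN=\mathbf{1}_{I'}\,dN+\mathbf{1}_{I'^{c}}\,dN$, and observe that the contribution of the second piece to $F$ is $C^{\infty}$ on $I$, since the kernel $\log(\,\cdot\,-E')$ is smooth away from the diagonal. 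The contribution of the first piece is the logarithmic potential of a compactly supported finite measure, and its boundary imaginary part on $I$ is essentially the Hilbert transform of its boundary real part; this yields the H\"older continuity of $N$ on $I$ with the same exponent, up to an arbitrarily small loss if needed.

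The main obstacle is more technical than conceptual. On the one hand, one has to handle the growth of $N$ at infinity in the Thouless formula, which is manageable because $N$ is nondecreasing, constant below the essential spectrum (uniformly bounded from below in $\omega$ thanks to the bounded support of $\nu$), and of at most polynomial growth at $+\infty$. On the other hand, one must verify that the adaptation of Kotani's theory carried out in Section~\ref{sec:proofth4} indeed produces the Thouless formula in a form amenable to this boundary-value argument, for instance in the regularized form $L(E)-L(E_{0})=\int_{\R}(\log|E-E'|-\log|E_{0}-E'|)\,dN(E')$. A conceptually cleaner alternative, which bypasses explicit manipulation of the Hilbert transform, is the Craig--Simon two-constants argument: one directly estimates $F$ in strips $\{0<\mathrm{Im}\,z<\varepsilon\}$ using its subharmonicity, converting H\"older regularity of $\mathrm{Re}\,F$ on the boundary into H\"older regularity of $F$ itself up to the boundary, hence of $N$.
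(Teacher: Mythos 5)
Your proposal matches the paper's approach: the paper likewise combines the Thouless formula (Proposition~\ref{thouless}, stated with the regularized kernel $\log\bigl|\tfrac{E'-E}{E'-\mathrm{i}}\bigr|$ to handle the growth of the density of states at infinity), the H\"older continuity of $E\mapsto\sum_{p}\gamma_{p}(E)$ from Theorem~\ref{th2}, and boundedness of the (local) Hilbert transform on $C^{\alpha}$ spaces, with the technical details delegated to \cite{neri} and to Section~4.3 of \cite{boumazarmp}. The Craig--Simon two-constants variant you mention as a cleaner alternative is not used in the paper, but the core route --- Thouless formula plus harmonic conjugation plus H\"older regularity of the Lyapunov sum --- is exactly the one taken.
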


\noindent Combining theorem \ref{th4} with theorem \ref{th1} we get this last theorem.

\begin{thm}\label{th5}
We assume that $N=3$ and that $c_2\in \R$, $c_1$ and $c_3$ are non-zero. Let $\mathcal{S}$ be the discrete subset of $\R$ obtained in theorem \ref{th1}. Let $I\subset \R \setminus \mathcal{S}$ be a compact interval. Then the integrated density of states associated to $H(\omega)$ is H\"older continuous on $I$.
\end{thm}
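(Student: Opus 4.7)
The plan is to deduce Theorem \ref{th5} as an essentially immediate combination of Theorem \ref{th1} and Theorem \ref{th4}, the only real work being to check that the pointwise hypothesis of Theorem \ref{th4} is supplied by the argument behind Theorem \ref{th1}.

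First I would revisit the proof of Theorem \ref{th1}. That proof is not merely a statement about positivity and distinctness of the Lyapunov exponents: it proceeds by establishing the Zariski denseness of the F\"urstenberg group $G(E)$ in $\mathrm{Sp}_{3}(\R)$ for every $E \in \R \setminus \mathcal{S}$, and then invokes the Gol'dsheid--Margulis criterion (Theorem \ref{thmGM}) to translate this Zariski denseness into the statement that $G(E)$ is $p$-contracting and $L_p$-strongly irreducible for each $p \in \{1,2,3\}$. So for every $E\in\R\setminus\mathcal{S}$ these two group-theoretic properties are already available as intermediate conclusions of the proof of Theorem \ref{th1}.

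Since the compact interval $I$ satisfies $I \cap \mathcal{S} = \emptyset$ by hypothesis, the previous paragraph applies at every $E \in I$. Thus $G(E)$ is $p$-contracting and $L_p$-strongly irreducible for every $p \in \{1,2,3\}$ and every $E \in I$, which is exactly the assumption of Theorem \ref{th4} in the case $N=3$. An application of Theorem \ref{th4} on $I$ then yields the H\"older continuity of the integrated density of states on $I$, finishing the proof.

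The only potential obstacle is making the intermediate observation explicit, namely that the proof of Theorem \ref{th1} already delivers the $p$-contractivity and $L_p$-strong irreducibility of $G(E)$, and not only the positivity and distinctness of the Lyapunov exponents. This is a matter of bookkeeping rather than substance. Notice that no uniformity in $E$ over $I$ is required here, since the hypothesis of Theorem \ref{th4} is phrased pointwise; compactness of $I$ enters only through the conclusion of Theorem \ref{th4} (via the Hilbert transform argument referred to in the introduction), so there is nothing further to verify.
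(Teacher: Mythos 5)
Your proposal is correct and follows essentially the same route as the paper: the paper deduces Theorem \ref{th5} simply by applying Theorem \ref{th4} on any compact $I\subset\R\setminus\mathcal{S}$, with the hypothesis of Theorem \ref{th4} supplied by the Zariski denseness of $G(E)$ (Proposition \ref{zariski_f_group}) combined with the Gol'dsheid--Margulis criterion (Theorem \ref{thmGM}), exactly as you spell out. Your extra care in noting that the needed $p$-contractivity and $L_p$-strong irreducibility come from the proof of Theorem \ref{th1} rather than from its stated conclusion is a fair and accurate reading, and does not alter the argument.
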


\noindent The fact that we can choose $c_2=0$ in theorem \ref{th1} and theorem \ref{th5} means that we are here in the presence of a phenomenon of propagation of randomness similar to the one found in the work of Glaffig (see \cite{glaffig}). Let us explain what we observe here. If $c_2=0$ or, as we will see in Section \ref{sec:proofth1}, if $H_{\omega_2}$ is deterministic, we still have the positivity of all the Lyapunov exponents and the regularity of the IDS of $H(\omega)$. But, due to Kotani's theory, the positivity of the Lyapunov exponents is directly related to the randomness in the model. Heuristically, if the second layer in our model (corresponding to $H_{\omega_2}$) is deterministic, we should not have that all Lyapunov exponents are positive. But the fact that we have randomness on the first and the third layer and that the three layers are coupled by the action of $V_0$ seems to mean that the randomness is somehow transported to the second layer. To the best of our knowledge the only other example of such a phenomenon of propagation of randomness  can be found in \cite{glaffig} where Glaffig proves a strong regularity result for the IDS of a discrete matrix-valued Schr\"odinger operator. In Glaffig's model, the randomness on the first layer is assumed to be very strong as it is given by \emph{i.i.d.} random variables whose common law is very regular (absolutely continuous with respect to the Lebesgue measure and with a Radon-Nikodym derivative being in a Sobolev space). On the other layers he only assumed that the random variables follow a Bernoulli law. Despite that lack of regularity on the other layers, he still proves that the IDS is $C^{\infty}$ which should have occured intuitively only if all the random variables were as regular as those on the first layer. But, as all the layers are coupled in his model, the strong regularity on the first layer is somehow propagating to the other layers.

\noindent Even if we do not prove that the IDS in our model is $C^{\infty}$ like in Glaffig's model, we prove the H\"older continuity with even less randomness than in Glaffig's model. We only need Bernoulli variables on the first and the third layers and we can even assume that the second layer is deterministic. Still, the H\"older continuity of the IDS is sufficient to hope to be able to prove in the future the Anderson localization for $H(\omega)$, that is that the spectrum of $H(\omega)$ is pure point and the associated eigenfunctions are exponentially decaying to $0$ at infinity. Indeed, with the H\"older continuity of the IDS we should be able to prove a Wegner estimate which is the key ingredient in order to apply a multi-scale analysis scheme to prove Anderson localization (see \cite{stollmann, kleingerminet}).
\vskip 2mm

\noindent We finish this introduction by giving the outline of the rest of the paper. Section \ref{sec:lyap} will be devoted to the proof of theorem \ref{th1}. In Section \ref{sec:lyapprop} we will recall the definitions of the Lyapunov exponents, the transfer matrices and the F\"urstenberg group. Then in Section \ref{sec:proofth1} we will present the computations which leads to the Zariski denseness of the F\"urstenberg group in the symplectic group. These computations will be very similar to those done in \cite{GM89} for the one-dimensional discrete matrix-valued Schr\"odinger operator. The reason is that in the point interactions model that we study here, the random parameters only appear at integer points. Thus, in the transfer matrices, the random parameters will appear in the same way as in the discrete model. The difference will be that we have here a continuous model and thus the energy parameter $E$ will appear in a way different than in the discrete case. This is the reason of the existence of the set $\mathcal{S}$ of critical energies in theorem \ref{th1}. These critical energies did not appear in the discrete model of \cite{GM89}. 

\noindent In Section \ref{sec:regular} we will focus on the regularity results for the Lyapunov exponents and the IDS. Section \ref{sec:regularlyap} deals with the regularity of the Lyapunov exponents while in Section \ref{sec:regularids} we will prove the existence of the IDS and its H\"older continuity.

\section{Positivity of the Lyapunov exponents}\label{sec:lyap}

\subsection{Lyapunov exponents and transfer matrices}\label{sec:lyapprop}

We start this section with a review of some results about Lyapunov exponents. These results holds for general sequences of \emph{i.i.d.} random symplectic matrices. Let $N$ be a positive integer and $\SpN$ denote the group of $2N\times 2N$ real symplectic matrices. It is the subgroup of $\mathrm{GL}_{2\mathrm{N}}(\R)$ of matrices $M$ satisfying $ ^tMJM=J$, where $J$ is the matrix of order $2N$ defined by $J=\left(\begin{array}{cc}
0 & -I_{\mathrm{N}} \\
I_{\mathrm{N}} & 0
\end{array}\right)$.

\begin{Def}
Let $(A_{n}^{\omega})_{n\in \N}$ be a sequence of \emph{i.i.d.}\ random matrices in $\SpN$ with $\mathbb{E}(\log^{+}||A_{0}^{\omega}||) <+\infty$. The Lyapunov exponents $\gamma_{1},\ldots,\gamma_{2N}$ associated with $(A_{n}^{\omega})_{n\in \N}$ are defined inductively, for every $p\in \{1,\ldots,2N\}$, by
\begin{equation}\label{lyapdef}
\sum_{i=1}^{p} \gamma_{i} = \lim_{n \to \infty} \frac{1}{n}
\mathbb{E}(\log ||\wedge^{p} (A_{n-1}^{\omega}\ldots A_{0}^{\omega})
||).
\end{equation}
\end{Def}

\noindent Here, $\wedge^{p} M$ denotes the $p$th exterior power of the matrix $M$, acting on the $p$th exterior power of $\R^{2N}$. One has $\gamma_{1}\geq \ldots \geq \gamma_{2N}$. Moreover, the random matrices $(A_{n}^{\omega})_{n\in \N}$ being symplectic, we have the symmetry property $\gamma_{2N-i+1}= -\gamma_{i}$, for every $i \in \{1,\ldots,N\}$ (see \cite{bougerol} p.$89$, Prop $3.2$).

\noindent To define Lyapunov exponents associated to the operator $H(\omega)$, we first introduce the sequence of transfer matrices associated to $H(\omega)$. Let $E\in \R$ and consider the second order differential system
\begin{equation}\label{system2}
H(\omega)u=Eu.
\end{equation}
A function $u=(u_1,\ldots,u_N):\R \to \C^N$ is called a solution of (\ref{system2}) whenever $-u''+V_0 u=Eu$ on $\R \setminus \Z$ and each $u_i$ satisfies the same boundary conditions as elements in $D(H_{\omega_i})$, that is 
\begin{equation}\label{solsystem2}
\forall i\in \{1,\ldots, N\},\ \forall n\in \Z,\ u_{i}'(n^+)=u_{i}'(n^-)+c_i \omega_{i}^{(n)} u_i (n).
\end{equation}

\begin{Def}
If $u$ is a solution of (\ref{system2}) then the transfer matrix $A_{(n,n+1]}^{\omega^{(n)}}(E)$ from $n^+$ to $(n+1)^+$ is defined by the relation
\begin{equation}\label{reltransfermat}
\left( \begin{array}{c}
u((n+1)^+) \\
u'((n+1)^+) 
\end{array} \right) = A_{(n,n+1]}^{\omega^{(n)}}(E) \left( \begin{array}{c}
u(n^+) \\
u'(n^+) 
\end{array} \right).
\end{equation}
\end{Def}

\noindent Then for every $n\in \Z$, $A_{(n,n+1]}^{\omega^{(n)}}(E) \in \SpN$ as $ ^t(u,u')$ is a solution of the first order Hamiltonian system associated to system (\ref{system2}). Thus the sequence $\left(A_{(n,n+1]}^{\omega^{(n)}}(E)\right)_{n\in \Z}$ is a sequence of \emph{i.i.d.} symplectic matrices and the Lyapunov exponents associated to it are by definition the Lyapunov exponents of $H(\omega)$.
\vskip 2mm

\noindent We have an explicit form for $A_{(n,n+1]}^{\omega^{(n)}}(E)$. To compute it we start by solving the free system (\ref{system2}) on $(n,n+1)$. Actually we only have to do it on $(0,1)$ due to the $1$-periodicity of $V_0$. Then the transfer matrix from $n^+$ to $(n+1)^-$ is given by
\begin{equation}\label{transfer_exp} 
A_{(0,1)}(E)=\exp \left( \begin{array}{cc}
0 & I_{\mathrm{N}} \\
V_0-EI_{\mathrm{N}} & 0
\end{array}\right).
\end{equation}
We also set, for every $N\times N$ matrix $Q$ the $2N\times 2N$ matrix $M(Q)$ given by $M(Q)=\left( \begin{array}{cc}
I_{\mathrm{N}} & 0 \\
Q & I_{\mathrm{N}}
\end{array}\right)$. Then, using the interface relation (\ref{solsystem2}), the transfer matrix from $(n+1)^-$ to $(n+1)^+$ is $M(\mathrm{diag}(c_1\omega_1^{(n)},\ldots,c_N\omega_N^{(n)}))$. Thus we have
\begin{equation}\label{transfer_mat_split}
A_{(n,n+1]}^{\omega^{(n)}}(E)=M(\mathrm{diag}(c_1\omega_1^{(n)},\ldots,c_N\omega_N^{(n)})) \; A_{(0,1)}(E).
\end{equation}
The transfer matrix splits in a product of two factors. The first factor contains the random part of the transfer matrix and is independent of $E$. The second factor is deterministic and depends only on $E$.

\noindent As the matrices $A_{(n,n+1]}^{\omega^{(n)}}(E)$ are \emph{i.i.d.}, we denote by $\mu_{E}$ their common law.

\begin{Def}\label{f_group_def}
The F\"urstenberg group associated to $H(\omega)$ is the closed group generated by the support of $\mu_{E}$ : $G(E)=\overline{<\supp \mu_{E}>}$.
\end{Def}
Because the transfer matrices are \emph{i.i.d.} we have an internal description of $G(E)$,
\begin{equation}\label{f_group_interne}
G(E)=\overline{<A_{(0,1]}^{\omega^{(0)}}(E) \ |\ \omega^{(0)} \in \supp \nu>}.
\end{equation}

\subsection{Proof of theorem \ref{th1}}\label{sec:proofth1}

\noindent Using the criterion of Gol'dsheid and Margulis and Kotani's theory, theorem \ref{th1} reduces to the following proposition.

\begin{prop}\label{zariski_f_group}
Let $N=3$ and assume $c_2\in \R$ and $c_1$ and $c_3$ being non-zero. There exists a discrete subset $\mathcal{S}$ of $\R$ such that for every $E\in \R \setminus \mathcal{S}$, $G(E)$ is Zariski dense in $\SpT$.
\end{prop}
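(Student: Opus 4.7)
My plan is to deduce Proposition \ref{zariski_f_group} from an explicit Lie-algebra computation, following closely the strategy of \cite{GM89}. The factorization (\ref{transfer_mat_split}) gives, for $\omega, \omega' \in \supp \nu$,
\[
A^{\omega^{(0)}}_{(0,1]}(E) \bigl(A^{\omega'^{(0)}}_{(0,1]}(E)\bigr)^{-1} = M\bigl(\mathrm{diag}(c_1(\omega_1-\omega'_1),\, c_2(\omega_2-\omega'_2),\, c_3(\omega_3-\omega'_3))\bigr) \in G(E).
\]
The spanning assumption on $\{x-y : x,y \in \supp\nu\}$ combined with $c_1, c_3 \neq 0$ ensures that the Zariski closure of $G(E)$ contains every one-parameter subgroup $\{M(tD) : t \in \R\}$ with $D$ a diagonal matrix whose second entry vanishes when $c_2 = 0$ (and $D$ an arbitrary diagonal otherwise). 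Multiplying any single $A^{\omega^{(0)}}_{(0,1]}(E) \in G(E)$ by an appropriate $M$-factor from this subgroup yields $A_{(0,1)}(E) \in \overline{G(E)}^{\mathrm{Zar}}$.

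Let $\mathfrak{g}(E)$ denote the Lie algebra of $\overline{G(E)}^{\mathrm{Zar}}$. The previous step shows that $\mathfrak{g}(E)$ contains the abelian subalgebra
\[
\mathcal{N} = \Bigl\{ \begin{pmatrix} 0 & 0 \\ D & 0 \end{pmatrix} : D = \mathrm{diag}(d_1,d_2,d_3),\; d_2 = 0 \text{ if } c_2 = 0 \Bigr\},
\]
and since $\mathfrak{g}(E)$ is $\mathrm{Ad}(A_{(0,1)}(E))$-invariant, it also contains the subspace
\[
\mathcal{L}(E) = \sum_{k \geq 0} \mathrm{Ad}(A_{(0,1)}(E))^k\, \mathcal{N},
\]
which is finite-dimensional by Cayley--Hamilton. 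The proposition reduces to proving that $\mathcal{L}(E) = \mathfrak{sp}_3(\R)$ (a space of dimension $21$) for every $E$ outside a discrete subset $\mathcal{S}$. Since the entries of $A_{(0,1)}(E)$ are entire functions of $E$, the condition $\dim \mathcal{L}(E) < 21$ amounts to the simultaneous vanishing of finitely many real-analytic functions of $E$; hence $\mathcal{S}$ is automatically discrete, provided $\mathcal{L}(E_0) = \mathfrak{sp}_3(\R)$ holds for at least one $E_0$.

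The core of the argument is then an explicit verification at some $E_0$. Iterated brackets with $X(E) = \begin{pmatrix} 0 & I_3 \\ V_0 - EI_3 & 0 \end{pmatrix}$ produce, first, $[X(E), \hat{D}] = \begin{pmatrix} D & 0 \\ 0 & -D \end{pmatrix}$ (Cartan directions), and then, via $\mathrm{ad}(X(E))^2$, the symmetric tridiagonal matrix $(V_0 - EI_3)D + D(V_0 - EI_3)$ in the lower-left block, whose off-diagonal entries $d_1+d_2$ and $d_2+d_3$ are nonzero precisely because $V_0$ has nonzero super- and sub-diagonal. Further iterations propagate this coupling through the $A$, $B$, and $C$ blocks of the standard decomposition of $\mathfrak{sp}_3(\R)$, so that for a suitable $E_0$ the iterates span all $21$ dimensions. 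I expect the main obstacle to be precisely this combinatorial linear-algebra calculation, particularly in the case $c_2 = 0$: only two generators of $\mathcal{N}$ are then available, and the randomness in the outer layers reaches the middle layer exclusively through the off-diagonal entries of $V_0$, which is the algebraic manifestation of the ``propagation of randomness'' phenomenon discussed in the introduction. Once this verification is complete, Theorem \ref{thmGM} together with Kotani's theory yield the absence of absolutely continuous spectrum asserted in Theorem \ref{th1}.
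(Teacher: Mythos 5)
Your first two steps are exactly the paper's Steps 1--2 (and Lemma \ref{MQlemma}): the difference-quotient identity puts the unipotent matrices $M(\mathrm{diag}(\ldots))$ inside $G(E)$, the spanning hypothesis on $\supp\nu$ fills out the abelian subalgebra $\mathcal{N}$, and multiplying by an appropriate $M$-factor places $A_{(0,1)}(E)$ in the Zariski closure. The discreteness-by-analyticity argument at the end is also what the paper does. The problems lie in the middle, and they are substantial.

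First, you silently replace conjugation by $A_{(0,1)}(E) = \exp X(E)$ with iterated brackets with $X(E)$ itself. These are not interchangeable. The Zariski closure contains $\exp X(E)$, which makes $\mathfrak{g}(E)$ invariant under $\mathrm{Ad}(\exp X(E))$; it does not follow that $X(E) \in \mathfrak{g}(E)$, nor that $\mathfrak{g}(E)$ is $\mathrm{ad}(X(E))$-invariant. (The implication ``$\exp X$ lies in an algebraic group $\Rightarrow$ $X$ lies in its Lie algebra'' fails in general: a subspace can be invariant under $\mathrm{Ad}(\exp X)$ without being $\mathrm{ad}(X)$-invariant, e.g.\ whenever some eigenvalue difference of $X$ lies in $2\pi\mathrm{i}\Z$.) The paper never makes this substitution; Step 3 of the paper's proof conjugates by the group element $A_{(0,1)}(E)^l$, giving the explicit families $D_1(l)$, $D_2(l)$ expressed through $R_{\alpha,\beta,\gamma}^l$.

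Second, and more seriously, your claimed reduction --- that it suffices to show the $\mathrm{Ad}$- (or $\mathrm{ad}$-) orbit $\mathcal{L}(E)$ of $\mathcal{N}$ spans all $21$ dimensions --- is probably not achievable. The paper's proof does \emph{not} get $21$ dimensions from conjugates of $\mathcal{N}$ alone. In Step 6 it takes additional Lie brackets, namely $B_0 = [\tilde{B}_0, D_1(0)]$ and $B_1 = [\tilde{B}_1, D_1(0)]$, precisely to produce two more directions that the conjugated family does not supply on its own; only then do the families $(D_1(0),\ldots,D_1(7))$ and $(B_0, B_1, D_2(0),\ldots,D_2(10))$ give $8 + 13 = 21$. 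So your plan to span $\mathfrak{sp}_3(\R)$ purely by iterating $\mathrm{ad}(X(E))$ on $\mathcal{N}$ is not what the paper does and is unlikely to close without taking brackets of elements already produced. Finally, the computation you flag as ``the main obstacle'' --- the linear-independence verification that makes the discreteness argument run (the $8\times 8$ determinant \eqref{det88_formula} and the $13\times 13$ numerical check at $E = 1.6$) --- is precisely the content of Steps 5--8 and is left entirely undone in your proposal, so what you have is a plan with a genuine gap, not a proof.
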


\noindent The rest of this section is devoted to the proof of this proposition. We begin by writing the explicit form of $A_{(0,1]}^{\omega^{(0)}}(E)$. For this we have to compute the exponential in $A_{(0,1)}(E)$. We start by diagonalizing in orthonormal basis the symmetric matrix defining $V_{0}$ :
\begin{equation}\label{V0diag}
V_0=\left( \begin{array}{ccc}
0 & 1 & 0 \\
1 & 0 & 1 \\
0 & 1 & 0
\end{array} \right)= U \left( \begin{array}{ccc}
1 & 0 & 0\\
0 & \sqrt{2} & 0\\
0 & 0 & -\sqrt{2}
\end{array} \right) U^{-1}
\end{equation}
with 
\begin{equation}\label{V0diagU}
U=\frac{1}{2} \left( \begin{array}{ccc}
-\sqrt{2} & 1 & 1 \\
0 & \sqrt{2} & -\sqrt{2}\\
\sqrt{2} & 1 & 1 
\end{array} \right)\ \mathrm{and}\ U^{-1}= ^tU.
\end{equation}
\textbf{We now assume that $E>\sqrt{2}$} and we will deal with the other cases later. By computing the successive powers of $\left( \begin{array}{cc}
0 & I_{\mathrm{N}} \\
V_0-EI_{\mathrm{N}} & 0
\end{array}\right)$ one gets
\begin{equation}\label{diagA01}
A_{(0,1)}(E)=\left( \begin{array}{cc}
U & 0 \\
0 & U
\end{array} \right) \; R_{\alpha,\beta,\gamma} \; \left( \begin{array}{cc}
U^{-1} & 0 \\
0 & U^{-1}
\end{array} \right)\;,
\end{equation}
where $\alpha=\sqrt{E-1}$, $\beta=\sqrt{E-\sqrt{2}}$, $\gamma=\sqrt{E+\sqrt{2}}$ and
\begin{equation}\label{ralphabetagamma}
R_{\alpha,\beta,\gamma}=\left( \begin{array}{cccccc}
\cos\alpha & 0 & 0 & \frac{1}{\alpha}\sin\alpha & 0 & 0 \\
0 & \cos\beta & 0 & 0 & \frac{1}{\beta}\sin\beta & 0 \\
0 & 0 & \cos\gamma & 0 & 0 & \frac{1}{\gamma}\sin\gamma \\
-\alpha \sin\alpha & 0 & 0 & \cos\alpha & 0 & 0 \\
0 & -\beta \sin\beta & 0 & 0 & \cos\beta & 0 \\
0 & 0 & -\gamma \sin\gamma & 0 & 0 & \cos\gamma
\end{array}\right)\;.
\end{equation}

\noindent To prove the Zariski denseness of $G(E)$ in $\SpT$ we use the fact that $\SpT$ is a connected Lie group. If $\clz$ denotes the Zariski closure of $G(E)$ in $\SpT$, we only have to show that $\gt$, the Lie algebra of $\clz$, is equal to $\spt$, the Lie algebra of $\SpT$. We recall that
\begin{equation}
\spt = \left\lbrace \left(\begin{array}{cc}
a & b_{1} \\
b_{2} & -{}^ta
\end{array} \right),\ a\in \mathcal{M}_{3}(\R),\ b_{1}\ \mathrm{and}\ b_{2}\ \mathrm{symmetric}\ \right\rbrace
\end{equation}
and $\spt$ is of dimension $21$. Our strategy will be to exhibit a family of $21$ linearly independent elements in $\gt$. Before starting this construction we prove the following lemma.

\begin{lem}\label{MQlemma}
For a three-by-three matrix $Q$ one has $M(Q) \in \clz$ if and only if $\left( \begin{array}{cc} 0 & 0 \\ Q & 0 \end{array} \right) \in \gt$.
\end{lem}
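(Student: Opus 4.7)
The plan is to exploit the fact that $M(Q)$ and the matrix $X:=\left(\begin{smallmatrix} 0 & 0 \\ Q & 0 \end{smallmatrix}\right)$ are linked by the matrix exponential in an essentially trivial way. Since $X^{2}=0$, the exponential series terminates and one finds $\exp(tX) = I_{6}+tX = M(tQ)$ for every $t\in\R$. In particular $\{M(tQ)\}_{t\in\R}$ is a smooth one-parameter subgroup of $\SpT$ whose infinitesimal generator at $t=0$ is $X$. Note that for $M(Q)$ to belong to $\SpT$ one must have $Q={}^{t}Q$, and for $X$ to belong to $\spt$ the same symmetry is required by the description of $\spt$ recalled above, so symmetry of $Q$ is enforced automatically by either side of the equivalence.

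For the direction $(\Leftarrow)$, suppose $X\in\gt$. Since $\clz$ is a Zariski closed subgroup of $\SpT$, hence a closed Lie subgroup whose Lie algebra is $\gt$, the matrix exponential sends $\gt$ into the identity component of $\clz$. Evaluating at $X$ yields $M(Q)=\exp(X)\in\clz$.

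For $(\Rightarrow)$, suppose $M(Q)\in\clz$. Because $\clz$ is a group, $M(nQ)=M(Q)^{n}\in\clz$ for every $n\in\Z$. Now $\clz$ is Zariski closed in $\mathcal{M}_{6}(\R)$, so the preimage $\mathcal{T}:=\{t\in\R\,|\,M(tQ)\in\clz\}$ is Zariski closed in $\R$, being the pullback of $\clz$ along the affine (hence polynomial) map $t\mapsto M(tQ)$. Since $\mathcal{T}$ contains the infinite set $\Z$ and the proper Zariski closed subsets of $\R$ are finite, one concludes $\mathcal{T}=\R$. The one-parameter subgroup $\{M(tQ)\}_{t\in\R}$ therefore lies entirely in $\clz$, and differentiating at $t=0$ produces the tangent vector $X\in\gt$. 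The only place where anything non-formal happens is this last step --- upgrading integer powers to the full real curve --- and the Zariski density of $\Z$ in $\R$ dispatches it cleanly.
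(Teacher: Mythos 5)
Your proof is correct and follows essentially the same route as the paper: the $(\Leftarrow)$ direction exponentiates the nilpotent generator, and the $(\Rightarrow)$ direction uses Zariski density of $\Z$ in $\R$ to upgrade $\{M(nQ)\}_{n\in\Z}\subset\clz$ to the full real one-parameter subgroup before differentiating. The only cosmetic difference is that the paper runs the density argument through the intermediate group $G_Q=\{M(Q)^n:n\in\Z\}$ and its Zariski closure, while you phrase it directly as the preimage $\{t:M(tQ)\in\clz\}$ being Zariski closed in $\R$ and infinite; these are the same argument.
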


\begin{proof}
Assume that $\left( \begin{array}{cc} 0 & 0 \\ Q & 0 \end{array} \right) \in \gt$. Then $M(Q)=\exp(M(Q)-I)\in \clz$ because $\clz$ is a Lie group. Conversely, if $M(Q)\in \clz$ we consider the subgroup $G_Q := \{M(nQ) = M(Q)^n:n\in\Z\}$ of $\clz$. It follows that $M(xQ) \in \mathrm{Cl}_{\mathrm{Z}}(G_Q)$ for all $x\in\R$. To see this, let $p$ be a polynomial in $6\times 6$ variables such that $p(A)=0$ for all $A\in G_Q$. Then the polynomial in one variable $\tilde{p}(x) := p(M(xQ))$ vanishes at every integer point, thus it vanishes identically on $\R$. So, for every $x\in \R$, $p(M(xQ))=0$. Then by definition of the Zariski closure, for every $x\in \R$, $M(xQ)\in \mathrm{Cl}_{\mathrm{Z}}(G_Q)\subset \clz$. By writing 
$$M(xQ)=I_{\mathrm{2N}}+x\left( \begin{array}{cc} 0 & 0 \\ Q & 0 \end{array} \right)$$
and by differentiating at the identity (at $x=0$), by definition of a Lie algebra, $\left( \begin{array}{cc} 0 & 0 \\ Q & 0 \end{array} \right) \in \gt$.
\end{proof}

\begin{proof}[\textbf{Proof of proposition \ref{zariski_f_group} for $E>\sqrt{2}$}]
\noindent \textbf{Step 1 :} From (\ref{transfer_mat_split}), for $\omega^{(0)}$ and $\tilde{\omega}^{(0)}$ in $\supp \nu$ we have
\begin{equation}\label{aainv}
A_{(0,1]}^{\tilde{\omega}^{(0)}}(E) A_{(0,1]}^{\omega^{(0)}}(E)^{-1}= M(\mathrm{diag}(c_1(\tilde{\omega}_1^{(0)}-\omega_1^{(0)}),\ldots,c_3(\tilde{\omega}_3^{(0)}-\omega_3^{(0)}))) 
\end{equation}
is in $G(E)$. Thus, $\gt$ being an algebra and therefore closed under linear combinations, using lemma \ref{MQlemma} and the hypothesis that $\{x-y\ |\ x,y\in \supp \nu \}$ is spanning $\R^3$, we get that  $\left( \begin{array}{cc} 0 & 0 \\ Q & 0 \end{array} \right) \in \gt$ for every diagonal matrix $Q$. If we assume that $c_2=0$ then it is true only for $Q$ of the form $\mathrm{diag}(a_1,0,a_2)$, $a_1,a_2\in \R$.

\noindent \textbf{Step 2 :} By step $1$ and lemma \ref{MQlemma}, $M(Q)\in \clz$ for every diagonal matrix $Q$ (or at least with a zero on the second diagonal term if $c_2=0$). In particular,
\begin{equation}\label{A01_in_clz}
A_{(0,1)}(E) = M(\mathrm{diag}(c_1\omega_1^{(0)},\ldots, c_3\omega_3^{(0)}))^{-1} A_{(0,1]}(E) \in \clz.
\end{equation}

\noindent \textbf{Step 3 :} By a general property in Lie groups, $XMX^{-1} \in \gt$ for every $M\in \gt$ and $X\in \clz$. Thus, for every $l\in \Z$,
\begin{eqnarray}
\left( \begin{array}{cc}
U & 0 \\
0 & U
\end{array} \right)\,R_{\alpha,\beta,\gamma}^{l}\,\left( \begin{array}{cc}
0 & 0 \\
U^{-1}QU & 0
\end{array} \right)\,R_{\alpha,\beta,\gamma}^{-l}\,\left( \begin{array}{cc}
U^{-1} & 0 \\
0 & U^{-1}
\end{array} \right) =  \nonumber \\
A_{(0,1)}(E)^l \left( \begin{array}{cc} 0 & 0 \\
Q & 0 \end{array} \right) A_{(0,1)}(E)^{-l} \in \gt, \label{alphabeta}
\end{eqnarray}
for any diagonal matrix $Q$. As $U$ is orthogonal,
\begin{equation}\label{tilde_gt}
\gt=\spt \Leftrightarrow \gtt := \left(
\begin{array}{cc}
U & 0 \\
0 & U
\end{array} \right)\gt \left( \begin{array}{cc}
U^{-1} & 0 \\
0 & U^{-1}
\end{array} \right)=\spt
\end{equation}
So we are bring to prove that $\gtt=\spt$. For this, we will use that for every $l\in \Z$ and every diagonal matrix $Q$, by (\ref{alphabeta}),
\begin{equation}\label{tildealphabeta}
R_{\alpha,\beta,\gamma}^{l}\,\left( \begin{array}{cc}
0 & 0 \\
U^{-1}QU & 0
\end{array} \right)\,R_{\alpha,\beta}^{-l} \in \gtt.
\end{equation}

\noindent \textbf{Step 4 :} One can choose $Q=\mathrm{diag}(\sqrt{2},0,\sqrt{2})$ (even for $c_2=0$) to get 
$$U^{-1}QU=\left( \begin{array}{ccc}
0 & -1 & -1 \\
-1 & 0 & 0 \\
-1 & 0 & 0
\end{array}\right).$$
For every $x\in \R$, we set $s(x)=\sin(x)$ and $c(x)=\cos(x)$. By step $3$, for every $l\in \Z$ we have
{\footnotesize $$D_{1}(l):=\left( \begin{array}{cccccc}
0 & \frac{s(l\alpha)c(l\beta)}{\alpha} & \frac{s(l\alpha)c(l\gamma)}{\alpha} & 0 & \frac{s(l\alpha)s(l\beta)}{\alpha \beta} & \frac{s(l\alpha)s(l\gamma)}{\alpha \gamma} \\
\frac{s(l\beta)c(l\alpha)}{\beta} & 0 & 0 & \frac{s(l\alpha)s(l\beta)}{\alpha \beta} & 0 & 0 \\
\frac{s(l\gamma)c(l\alpha)}{\gamma} & 0 & 0 & \frac{s(l\alpha)s(l\gamma)}{\alpha \gamma} & 0 & 0 \\
0 & -c(l\alpha)c(l\beta) & -c(l\alpha)c(l\gamma) & 0 & -\frac{s(l\beta)c(l\alpha)}{\beta} & -\frac{s(l\gamma)c(l\alpha)}{\gamma} \\
-c(l\alpha)c(l\beta) & 0 & 0 & -\frac{s(l\alpha)c(l\beta)}{\alpha} & 0 & 0\\
-c(l\alpha)c(l\gamma) & 0 & 0 & -\frac{s(l\alpha)c(l\gamma)}{\alpha} & 0 & 0
\end{array}\right) $$}
is in $\gtt$. We can also choose $Q=\mathrm{diag}(2,0,2)$ (even if $c_2=0$) to get 
$$U^{-1}QU=\left(\begin{array}{ccc}
2 & 0 & 0 \\
0 & 1 & 1 \\
0 & 1 & 1
\end{array} \right).$$
And so for every $l\in \Z$, 
{\footnotesize $$D_{2}(l):=\left( \begin{array}{cccccc}
\frac{2s(l\alpha)c(l\alpha)}{\alpha} & 0 & 0 & -\frac{2s^{2}(l\alpha)}{\alpha^{2}} & 0 & 0 \\
0 & \frac{s(l\beta)c(l\beta)}{\beta} & \frac{s(l\beta)c(l\gamma)}{\beta} & 0 & -\frac{s^{2}(l\beta)}{\beta^{2}} & -\frac{s(l\beta)s(l\gamma)}{\beta \gamma} \\
0 & \frac{s(l\gamma)c(l\beta)}{\gamma} & \frac{s(l\gamma)c(l\gamma)}{\gamma} & 0 & -\frac{s(l\beta)s(l\gamma)}{\beta \gamma} & -\frac{s^{2}(l\gamma)}{\gamma^{2}} \\
2c^{2}(l\alpha)& 0 & 0 & -\frac{2c(l\alpha)s(l\alpha)}{\alpha} & 0 & 0 \\
0 & c^{2}(l\beta) & c(l\beta)c(l\gamma) & 0 & -\frac{s(l\beta)c(l\beta)}{\beta} & -\frac{s(l\gamma)c(l\beta)}{\gamma} \\
0 & c(l\beta)c(l\gamma) & c^{2}(l\gamma) & 0 & -\frac{s(l\beta)c(l\gamma)}{\beta} & -\frac{s(l\gamma)c(l\gamma)}{\gamma}
\end{array} \right) $$}
is in $\gtt$.

\noindent \textbf{Step 5 :} We prove that except for a discrete set of values of $E$, the matrices $D_1(0),\ldots,D_1(7)$ are linearly independent. Indeed, if one computes the $8\times 8$ determinant of the vectors
$$\left(\begin{array}{c}
-\cos(l\alpha)\cos(l\beta) \\
-\cos(l\alpha)\cos(l\gamma)\\
\frac{\sin(l\alpha)\sin(l\beta)}{\alpha \beta}\\
\frac{\sin(l\alpha)\sin(l\gamma)}{\alpha \gamma}\\
\frac{\sin(l\alpha)\cos(l\beta)}{\alpha}\\
\frac{\sin(l\alpha)\cos(l\gamma)}{\alpha}\\
\frac{\sin(l\beta)\cos(l\alpha)}{\beta}\\
\frac{\sin(l\gamma)\cos(l\alpha)}{\gamma}
\end{array}\right),\  l=0,\ldots,7.$$
representing the $D_1(l)$ matrices, one gets :
\begin{eqnarray}
4096\sin^{4}(\alpha)\sin^{2}(\beta)\sin^{2}(\gamma)(\cos\alpha-\cos\beta)^{4}(\cos^{2}(\alpha)-\cos^{2}(\beta))\times \nonumber \\ 
\times(\cos^{2}(\alpha)-\cos^{2}(\gamma))\left(-\sin^{2}(2\alpha)+\cos^{2}(\beta)+\cos^{2}(\gamma) \right. \nonumber \\
\left. +2\cos\beta\cos\gamma(1-2\cos^{2}(\alpha)) \right)^{2} \label{det88_formula}
\end{eqnarray}
which is a real analytic function of $E$ on $(\sqrt{2},+\infty)$ which does not identically vanish. Thus, this determinant vanishes only on a discrete set $\mathcal{S}_1 \subset (\sqrt{2},+\infty)$.

\noindent \textbf{Step 6 :} Let $E\in (\sqrt{2},+\infty) \setminus \mathcal{S}_1$. By step $5$ all matrices of the form
\begin{equation}\label{forme_mat_88}
\left(\begin{array}{cccccc}
0 & a & b & 0 & g & h \\
c & 0 & 0 & g & 0 & 0 \\
d & 0 & 0 & h & 0 & 0 \\
0 & e & f & 0 & -c & -d \\
e & 0 & 0 & -a & 0 & 0 \\
f & 0 & 0 & -b & 0 & 0 
\end{array}\right)
\end{equation}
for $(a,b,c,d,e,f,g,h)\in \R^{8}$ are in $\gtt$. In particular we set $\tilde{B}_0$ (respectively $\tilde{B}_1$) the matrix of the form (\ref{forme_mat_88}) with $a=1$ and the other parameters equal to $0$ (respectively $b=1$ and the other parameters equal to $0$). Then $\tilde{B}_0\in \gtt$, $\tilde{B}_1 \in \gtt$ and 
\begin{equation}\label{B0_in_gtt}
B_{0}:=[\tilde{B}_{0},D_{1}(0)]=
\left(\begin{array}{cccccc}
0 & 0 & 0 & 0 & 0 & 0 \\
0 & 0 & 0 & 0 & 0 & 0 \\
0 & 0 & 0 & 0 & 0 & 0 \\
0 & 0 & 0 & 0 & 0 & 0 \\
0 & 2 & 1 & 0 & 0 & 0 \\
0 & 1 & 0 & 0 & 0 & 0 
\end{array}\right)\in \gtt.
\end{equation}
and 
\begin{equation}\label{B1_in_gtt}
B_{1}:=[\tilde{B}_{1},D_{1}(0)]=
\left(\begin{array}{cccccc}
0 & 0 & 0 & 0 & 0 & 0 \\
0 & 0 & 0 & 0 & 0 & 0 \\
0 & 0 & 0 & 0 & 0 & 0 \\
0 & 0 & 0 & 0 & 0 & 0 \\
0 & 0 & 1 & 0 & 0 & 0 \\
0 & 1 & 2 & 0 & 0 & 0 
\end{array}\right)\in \gtt.
\end{equation}

\noindent \textbf{Step 7 :} We prove that $(B_0,B_1,D_2(0),\ldots,D_2(10))$ is a family of $13$ linearly independent matrices. For this we have to prove that a $13\times 13$ determinant does not vanish identically. This determinant is given by the $11$ vectors of the $13$ different (and non-colinear) non-zeros elements in $D_2(0),\ldots,D_2(10)$ and the $2$ vectors of the $13$ corresponding coefficients in $B_0$ and $B_1$. Numerically one can verify that $E=1.6 \in (\sqrt{2},+\infty)$, $E=1.6\notin \mathcal{S}_1$ and that for $E=1.6$, the determinant has approximate value of $-3507\neq 0$. Thus it does not identically vanish on $(\sqrt{2},+\infty)$. But this determinant being a real-analytic function of $E$ on $(\sqrt{2},+\infty)$ it is therefore vanishing only on a discrete set of values of $E$, $\mathcal{S}_2 \subset (\sqrt{2},+\infty)$.

\noindent \textbf{Step 8 :} Let $\mathcal{S}_3=\mathcal{S}_1 \cup \mathcal{S}_2$. Let $E\in  (\sqrt{2},+\infty) \setminus \mathcal{S}_3$. Looking at the zero coefficients in $D_1(l)$ and $D_2(l)$ one sees that the families $(D_1(0),\ldots,D_1(7))$ and $(B_0,B_1,D_2(0),\ldots,D_2(10))$ lie in two orthogonal subspaces of $\spt$. Thus, they generate two orthogonal subspaces of dimension $8$ and $13$ and the direct sum of these spaces is still contained in $\gtt$. Thus, $\mathrm{dim}\; \gtt \geq 21$. But $\gtt \subset \spt$ and $\mathrm{dim}\; \spt =21$, so $\gtt=\spt$. It endeed the proof for $E>\sqrt{2}$ as by connectedness, $\clz=\SpT$ for every $E\in (2,+\infty)\setminus \mathcal{S}_3$.
\end{proof}

\begin{proof}[\textbf{Proof of proposition \ref{zariski_f_group} for $E\leq \sqrt{2}$}]
For $E\in(1,\sqrt{2})$ we have the same expression for $A_{(0,1)}(E)$ as (\ref{diagA01}) but with changing $\alpha,\beta, \gamma$ into $\alpha=\sqrt{E-1}$, $\beta=\sqrt{\sqrt{2}-E}$ and $\gamma=\sqrt{E+\sqrt{2}}$. We also change in (\ref{ralphabetagamma}) the $\cos(\beta)$ into a $\cosh(\beta)$, the term $-\beta \sin(\beta)$ into $\beta \sinh(\beta)$ and the term $\frac{1}{\beta} \sin(\beta)$ into $\frac{1}{\beta} \sinh(\beta)$. Then we can follow the proof of the case $E>\sqrt{2}$ which leads to a discrete set $\mathcal{S}_4 \subset (1,\sqrt{2})$ such that for every $E\in (1,\sqrt{2})\setminus \mathcal{S}_4$, $\clz=\SpT$.

\noindent For $E\in(-\sqrt{2},1)$ we set $\alpha=\sqrt{1-E}$, $\beta=\sqrt{\sqrt{2}-E}$ and $\gamma=\sqrt{E+\sqrt{2}}$ and we do the same changes of cosinus and sinus into hyperbolic cosinus and hyperbolic sinus as in the case $E\in(1,\sqrt{2})$, for those involving $\alpha$ and $\beta$. Then following the proof of the case $E>\sqrt{2}$ we get the existence of a discrete set $\mathcal{S}_5 \subset (-\sqrt{2},1)$ such that for every $E\in (-\sqrt{2},1)\setminus \mathcal{S}_5$, $\clz=\SpT$.

\noindent For $E\in (-\infty,-\sqrt{2})$ we set $\alpha=\sqrt{1-E}$, $\beta=\sqrt{\sqrt{2}-E}$ and $\gamma=\sqrt{-E-\sqrt{2}}$ and we change all the sinus and cosinus into hyperbolic sinus and cosinus in (\ref{ralphabetagamma}). It leads to a discrete set $\mathcal{S}_6 \subset (-\infty,-\sqrt{2})$ such that for every $E\in (-\infty,-\sqrt{2})\setminus \mathcal{S}_6$, $\clz=\SpT$.
\vskip 2mm

\noindent Finally if we set $\mathcal{S}_7=\{-\sqrt{2},1,\sqrt{2} \}$, by setting $\mathcal{S}=\mathcal{S}_3 \cup \ldots \cup \mathcal{S}_7$ we have a discrete set such that for every $E\in \R\setminus \mathcal{S}$, $G(E)$ is Zariski dense in $\SpT$, which proves proposition \ref{zariski_f_group}.
\end{proof}

\noindent To finish the proof of theorem \ref{th1} it just remains to check that Kotani's theory can be applied for $H(\omega)$ which will be done in Section \ref{sec:proofth4}. Then from the non-vanishing of the Lyapunov exponents outside of $\mathcal{S}$, which is of Lebesgue measure $0$, we deduce the absence of absolutely continuous spectrum of $H(\omega)$. And thus theorem \ref{th1} is proved.

\noindent In our study of the Lyapunov exponents it would be also interesting to look at what happens at the values of $E$ in $\mathcal{S}$. We do not know exactly if the Lyapunov exponents are all vanishing at these energies or if only some of them vanish or even if they are all positive but not distinct. To handle directly with the computation of the Lyapunov exponents at the values of $E$ in $\mathcal{S}$ is much more difficult than in the case of scalar-valued operators. For the scalar-valued Anderson model in \cite{DLS06}, the Lyapunov exponent vanishes at the critical energies, which is clear because at these values, $G(E)$ is compact. Here it is no longer the case and except doing some numerical attempts to understand the situation, no rigorous proof has been found yet. We think to be in the presence of energies at which some Lyapunov exponents vanish while others are not. The case were all the Lyapunov exponents vanish does not seem to happen.

\section{Regularity results on Lyapunov exponents and the IDS}\label{sec:regular}

\subsection{H\"older continuity of the Lyapunov exponents}\label{sec:regularlyap}

In this section we prove theorem \ref{th2}. We will see how to use general results of \cite{boumazarmp} for the operator $H(\omega)$. 

\noindent We can deduce theorem \ref{th2} from theorem $1$, p.$885$ in \cite{boumazarmp} once we have proved the following estimates on the transfer matrices.

\begin{prop}\label{mat_transfer_estim}
Let $I$ be a compact interval in $\R$. There exist $C_{1}>0$, $C_{2}>0$ independent of $n,\omega^{(n)},E$ such that for every $p\in \{1,\ldots ,N\}$,
\begin{equation}\label{estim1}
||\wedge^{p} A_{(n,n+1]}^{\omega^{(n)}}(E)||^{2} \leq \exp( pC_{1} + p|E| +p) \leq C_{2}.
\end{equation}
And there exists $C_{3}>0$ independent of $n,\omega^{(n)},E$ such that for every $E,E'\in I$ and every $p\in \{1,\ldots ,N\}$, 
\begin{equation}\label{estim2}
||\wedge^{p} A_{(n,n+1]}^{\omega^{(n)}}(E)-\wedge^{p} A_{(n,n+1]}^{\omega^{(n)}}(E')||\leq C_{3} |E-E'|.
\end{equation}
\end{prop}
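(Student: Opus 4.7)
The plan is to exploit the factorization (\ref{transfer_mat_split}), which writes $A_{(n,n+1]}^{\omega^{(n)}}(E) = M_n \cdot A_{(0,1)}(E)$ with $M_n := M(\mathrm{diag}(c_1\omega_1^{(n)},\ldots,c_N\omega_N^{(n)}))$ independent of $E$ and $A_{(0,1)}(E)$ a deterministic matrix exponential. Since $\supp\nu$ is bounded, there is a constant $K_1$ depending only on the $c_i$ and $\supp\nu$ with $\|M_n\| \leq K_1$ uniformly in $n$ and $\omega$. Setting $B(E) := \left(\begin{smallmatrix} 0 & I_{\mathrm{N}} \\ V_0-EI_{\mathrm{N}} & 0 \end{smallmatrix}\right)$ and using $\|\exp(X)\| \leq \exp(\|X\|)$, we also get $\|A_{(0,1)}(E)\| \leq \exp(\|V_0\|+|E|+1)$.

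For (\ref{estim1}), I would combine these two bounds via submultiplicativity of the operator norm together with the elementary inequality $\|\wedge^p M\| \leq \|M\|^p$ to obtain $\|\wedge^p A_{(n,n+1]}^{\omega^{(n)}}(E)\|^2 \leq K_1^{2p}\exp(2p(\|V_0\|+|E|+1))$. Absorbing $K_1$ and $\|V_0\|$ into a single constant yields the claimed $C_1$; the bound by $C_2$ then follows because, in the application to theorem \ref{th2}, $E$ ranges over a compact interval $I$.

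For (\ref{estim2}), the factorization again does most of the work: since $M_n$ is $E$-independent, one has
\begin{equation*}
A_{(n,n+1]}^{\omega^{(n)}}(E)-A_{(n,n+1]}^{\omega^{(n)}}(E') = M_n \bigl(A_{(0,1)}(E)-A_{(0,1)}(E')\bigr).
\end{equation*}
Because $B(E)-B(E')=(E'-E)\left(\begin{smallmatrix} 0 & 0 \\ I_{\mathrm{N}} & 0 \end{smallmatrix}\right)$, the standard integral representation
\begin{equation*}
\exp(B(E))-\exp(B(E')) = \int_0^1 \exp(sB(E))\bigl(B(E)-B(E')\bigr)\exp((1-s)B(E'))\,\mathrm{d}s
\end{equation*}
combined with the exponential bound from the previous paragraph produces a Lipschitz estimate $\|A_{(0,1)}(E)-A_{(0,1)}(E')\| \leq C'|E-E'|$ uniform over $E,E' \in I$. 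Passing to the $p$-th exterior power uses the elementary bound $\|\wedge^p A - \wedge^p B\| \leq p\max(\|A\|,\|B\|)^{p-1}\|A-B\|$, which combined with the uniform norm bound from (\ref{estim1}) yields the desired constant $C_3$.

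I do not anticipate a genuine obstacle here: the argument is a routine combination of the product decomposition, the integral representation of the difference of two matrix exponentials, and the two standard inequalities for operator norms of exterior powers. The only care required is to verify that every constant produced is truly independent of $n$, of $\omega^{(n)}$, and, for (\ref{estim1}), of $E$ itself; this independence is guaranteed by the boundedness of $\supp\nu$ together with the compactness of $I$ in the Lipschitz step.
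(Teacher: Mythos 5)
Your proof is correct and follows the same overall strategy as the paper: split via (\ref{transfer_mat_split}), bound the random factor $M(\mathrm{diag}(c_1\omega_1^{(n)},\ldots,c_N\omega_N^{(n)}))$ uniformly using the boundedness of $\supp\nu$, and combine with bounds on the deterministic factor $A_{(0,1)}(E)$ through submultiplicativity of exterior powers and the Lipschitz inequality for $\wedge^p$. The one genuine difference is that the paper obtains the estimates (\ref{estim1}) and (\ref{estim2}) for $A_{(0,1)}(E)$ simply by citing \cite{boumazarmp}, where they were proved for transfer matrices of a matrix Anderson model whose free part has the same structure, whereas you re-derive them from scratch via $\|\exp X\|\leq\exp\|X\|$ and the Duhamel integral representation of $\exp(B(E))-\exp(B(E'))$. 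Your route is more self-contained and equally valid; the only cosmetic cost is that the crude bound $\|\exp X\|\leq\exp\|X\|$ applied to $B(E)$ and then squared produces coefficients $2p$ rather than $p$ in the exponent of (\ref{estim1}), which is harmless since everything is ultimately restricted to the compact interval $I$ and absorbed into $C_2$ and $C_3$.
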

\vskip 2mm

\begin{proof}
First we recall that $A_{(0,1)}(E)$ is obtained by solving the free system $H(\omega)u=Eu$ on $(0,1)$. This system is of the same form as the one from which we deduce the transfer matrices in \cite{boumazarmp}. Thus, the estimates (\ref{estim1}) and (\ref{estim2}) which were proved for the transfer matrices in \cite{boumazarmp} are still valid for $A_{(0,1)}(E)$ and thus $A_{(0,1)}(E)$ verifiy (\ref{estim1}) and (\ref{estim2}). Then $\supp \nu$ being bounded, $||M(\mathrm{diag}(c_1\omega_{1}^{(n)}, \ldots,c_N\omega_{N}^{(n)}))||$ will be bounded uniformly in $n$, $\omega^{(n)}$ and also $E$ because it does not depend on $E$. Let $C>0$ be independent of $n$, $\omega^{(n)}$ and $E$ such that $||M(\mathrm{diag}(c_1\omega_{1}^{(n)},\ldots,c_N\omega_{N}^{(n)})) || \leq C$. Then $||\wedge^p M(\mathrm{diag}(c_1\omega_{1}^{(n)},\ldots,c_N\omega_{N}^{(n)})) || \leq C^p$. Using the fact that $||\wedge^p (MN) || \leq ||\wedge^p M ||\; ||\wedge^p N ||$ for every matrices and the relation (\ref{transfer_mat_split}), we finally obtain  (\ref{estim1}) and (\ref{estim2}) for $A_{(n,n+1]}^{\omega^{(n)}}(E)$.
\end{proof}

\noindent Applying theorem $1$ in \cite{boumazarmp} and proposition \ref{mat_transfer_estim} we have proved theorem \ref{th2}. Then applying theorem \ref{th2} on every compact interval $I\subset \R \setminus \mathcal{S}$ where $\mathcal{S}$ is obtained in theorem \ref{th1}, we get theorem \ref{th3}.

\subsection{H\"older continuity of the IDS}\label{sec:regularids}

\subsubsection{Existence of the IDS}\label{sec:existids}

\noindent Once again we will follow the method used in \cite{boumazarmp} to prove this time the existence of the integrated density of states associated to $H(\omega)$ and its H\"older continuity. The proof of the existence of the limit (\ref{IDSdefi}) is based upon the fact that the one-parameter semi-group $(e^{-tH^{(D)}(\omega)})_{t>0}$ admits an integral kernel in $L^{2}(D^2)\otimes \mathcal{M}_{\mathrm{N}}(\C)$. This kernel is coming from a Feynman-Kac formula. There already exists such formula for scalar-valued point interactions operators as presented in \cite{FM2003}. Adapting the Borel measure representation method of \cite{FM2003}, using Lie-Trotter formula as it is done in \cite{boumazarmp} and noticing that the time-ordered exponential in \cite{boumazarmp} becomes now a usual exponential, one gets
\begin{equation}\label{feynman11}
\forall f\in L^{2}(\R)\otimes \C^N,\ \forall x\in \R,\ e^{-tH(\omega)}f(x)=\int_{\R} K_{t}(x,y)f(y)\mathrm{d}y
\end{equation}
with
\begin{equation}\label{feynman12}
\forall t>0,\ \forall x,y\in \R,\ K_{t}(x,y)=\int e^{-tV_0} \mathrm{d}\mu_{x,y,\omega}(\mathsf{w})
\end{equation}
where for every $\omega \in \tilde{\Omega}$ fixed, $\mu_{x,y,\omega}$ is a finite measure on the space $\mathsf{W}_{x,y}$ of the continuous paths $\mathsf{w}$ on $[0,t]$ such that $\mathsf{w}(0)=x$ and $\mathsf{w}(1)=y$. Then to deduce the kernel of $e^{-tH^{(D)}(\omega)}$, we introduce $T_{D}(\mathsf{w})$, the time of first exit from $D$ of the path $\mathsf{w}\in \mathsf{W}_{x,y}$,
\begin{equation}\label{timeexit}
T_D(\mathsf{w})=\inf \{ t>0\ |\ \mathsf{w}(t) \notin D\}.
\end{equation}
Then we have (see \cite{Kni81})
\begin{equation}\label{feynman21}
\forall f\in L^{2}(D)\otimes \C^N,\ \forall x\in D,\ e^{-tH^{(D)}(\omega)}f(x)=\int_{D} K_{t}^{(D)}(x,y)f(y)\mathrm{d}y
\end{equation}
with
\begin{equation}\label{feynman22}
\forall t>0,\ \forall x,y\in D,\ K_{t}^{(D)}(x,y)=\frac{e^{-\frac{|x-y|^2}{2t}}}{\sqrt{2\pi t}} \int \chi_{\{ t<T_D(\mathsf{w})\} }e^{-tV_0} \mathrm{d}\mu_{x,y,\omega}(\mathsf{w}).
\end{equation}
As we can see, for every $t>0$, $K_{t}^{(D)}$ is in $L^{2}(D^2)\otimes \mathcal{M}_{\mathrm{N}}(\C)$. Thus all the discussion made in Section $2.3$ of \cite{boumazarmp} applies here to get the existence of the IDS for every $E\in \R$ and its realization as the distribution function of a measure $\mathfrak{n}$ called the \emph{density of states}.

\subsubsection{Kotani's theory and proof of theorem \ref{th4}}\label{sec:proofth4}

\noindent We start by adapting Kotani's theory of \cite{KS88} to our setting. According to the presentation made in \cite{boumazarmp} we actually only have to prove that theorem $2.1$ (a) of \cite{KS88} is true for the operator $H(\omega)$. We fix $\omega \in \tilde{\Omega}$. Let $\C_{+}=\{ z\in \C,\ \mathrm{Im} z >0 \}$ and $\C_{-}=\{ z\in \C,\ \mathrm{Im} z < 0 \}$. For $E\in \C_{+} \cup \C_{-}$ we set 
$$J_{+}(E)=\left\lbrace f\in \bigoplus_{i=1}^N D(H_{\omega_i})\ \big| \ H(\omega)f=Ef\ \mathrm{and}\ \int_{0}^{\infty} |f(x)|^{2}dx <+\infty \right\rbrace$$
and
$$J_{-}(E)=\left\lbrace f\in \bigoplus_{i=1}^N D(H_{\omega_i})\ \big| \ H(\omega)f=Ef\ \mathrm{and}\ \int_{-\infty}^{0} |f(x)|^{2}dx<+\infty  \right\rbrace.$$

\begin{prop}\label{dimJ}
We have : $\mathrm{dim}\ J_{+}(E)=\mathrm{dim}\ J_{-}(E)=N$.
\end{prop}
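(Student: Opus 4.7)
The plan is to establish Proposition \ref{dimJ} by combining a dimension count of the full solution space with a half-line self-adjointness argument of Weyl--Kodaira type.

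First, I would identify the total solution space. On each open interval $(n,n+1)$, the equation $-u''+V_0 u=Eu$ is a linear second-order system in $\C^N$ and so has a $2N$-dimensional solution space. The interface conditions \eqref{solsystem2} are captured by the invertible transfer matrix $A_{(n,n+1]}^{\omega^{(n)}}(E)$ of \eqref{transfer_mat_split}, so initial data $(u(0^+),u'(0^+))\in\C^{2N}$ propagate uniquely in both directions across all integers. Hence the space of solutions of $H(\omega)u=Eu$ (with the prescribed jumps) has complex dimension exactly $2N$.

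Second, I would show that $J_+(E)\cap J_-(E)=\{0\}$. Indeed, any element of the intersection is in $L^2(\R)\otimes\C^N$, lies in $\bigoplus_i D(H_{\omega_i})$, and satisfies $H(\omega)f=Ef$. Thus $f$ is an $L^2$-eigenfunction of the self-adjoint operator $H(\omega)$ at the non-real eigenvalue $E\in\C_+\cup\C_-$, forcing $f=0$. Combined with the first step, this yields
\[
\dim J_+(E)+\dim J_-(E)\leq 2N.
\]

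Third, it remains to prove the matching lower bounds $\dim J_\pm(E)\geq N$. For this I would introduce the half-line operator $H_+(\omega)$ on $L^2([0,+\infty))\otimes\C^N$ with Dirichlet boundary condition at $0$, written as $H_+(\omega)=\bigoplus_{i=1}^N H_{\omega_i}^+ + V_0|_{[0,+\infty)}$. Each $H_{\omega_i}^+$ is the scalar half-line point interactions operator and is self-adjoint by \cite{albeverio} (the coupling constants $c_i\omega_i^{(n)}$ being uniformly bounded). Adding the bounded, self-adjoint perturbation $V_0$ preserves self-adjointness, so $H_+(\omega)$ is self-adjoint. By the matrix Weyl--Kodaira theory, self-adjointness of $H_+(\omega)$ is equivalent to the limit-point property at $+\infty$, i.e.\ to the fact that, for $E\notin\R$, the space of global solutions of $H(\omega)u=Eu$ that are $L^2$ on $[0,+\infty)$ has dimension exactly $N$. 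This gives $\dim J_+(E)\geq N$, and by the same argument on $(-\infty,0]$, $\dim J_-(E)\geq N$. Combined with the upper bound, we conclude $\dim J_\pm(E)=N$.

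The main obstacle is the last step: verifying that classical Weyl--Kodaira theory transfers to this matrix-valued \emph{point interactions} setting so that the self-adjointness of $H_+(\omega)$ indeed implies the limit-point property. One needs to check that the standard arguments (integration by parts producing a symplectic boundary form, Weyl disk monotonicity, and the characterization of $N$-dimensional Lagrangian limits) go through when the integral inner product is replaced by one taking into account the singular jumps at integers. This is straightforward because the jump conditions preserve the Wronskian/symplectic form across each integer, so the boundary form at $+\infty$ is well-defined in the usual sense; the boundedness of $V_0$ together with the resolvent-boundedness of $-d^2/dx^2 I_{\mathrm N}$ on the half-line (which is limit-point at $+\infty$) then forces the limit-point case for the full perturbed operator.
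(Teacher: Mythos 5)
Your proof is correct, but it takes a genuinely different route from the paper's and is slightly longer than necessary.

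The paper argues directly with the \emph{decoupled} operator: by \cite{albeverio} each scalar $H_{\omega_i}$ is limit point at $\pm\infty$, so the space of solutions of $H_{\omega_i}f=Ef$ that are $L^2$ near $+\infty$ (resp.\ $-\infty$) is one-dimensional; summing over the $N$ layers gives an $N$-dimensional $L^2$-solution space for $\bigoplus_i H_{\omega_i}$ on each half-line; and since $V_0$ is bounded and self-adjoint, deficiency-index theory shows that adding $V_0$ does not change these dimensions, giving $\dim J_\pm(E)=N$. You instead work with the \emph{coupled} half-line operator $H_+(\omega)$, establish its self-adjointness by Kato--Rellich (self-adjointness of $\bigoplus_i H_{\omega_i}^+$ plus bounded $V_0$), and invoke the matrix Weyl--Kodaira equivalence between self-adjointness under a Dirichlet condition at $0$ and the limit-point property at $+\infty$. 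Both arguments ultimately rest on the same two ingredients — the scalar limit-point property from \cite{albeverio} and the robustness under the bounded perturbation $V_0$ — but the paper's version is cleaner because it only needs the standard invariance of deficiency indices under bounded perturbations, whereas you must restate and justify the full matrix Weyl--Kodaira equivalence in the point-interactions setting, a gap you rightly flag as the ``main obstacle.''

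Two smaller remarks. First, your step 2 (the $2N$-dimensional total solution space and $J_+(E)\cap J_-(E)=\{0\}$) is correct but redundant: the equivalence you state in step 3 already gives $\dim J_+(E)=N$ exactly, not merely $\geq N$, so the upper bound it is meant to combine with is never needed. Second, the definition of $J_\pm(E)$ as written in the paper literally forces $f\in L^2(\R)\otimes\C^N$, in which case both spaces would be trivial; as you implicitly assume, the intended meaning is the \emph{local} version (functions satisfying the ODE and the interface conditions, without global square-integrability), and both proofs should be read with that convention.
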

\vskip 2mm

\begin{proof}
Each $H_{\omega_i}$ is in the limit point case of singular Weyl theory (see \cite{albeverio}) which ensure the existence of a unique solution $f_{1,i}\in D(H_{\omega_i})$ of $H_{\omega_i}f=Ef$ such that $\int_{0}^{\infty} |f_{1,i}(x)|^{2}dx <+\infty$ and a unique solution $f_{2,i}\in D(H_{\omega_i})$ of $H_{\omega_i}f=Ef$ such that $\int_{-\infty}^{0} |f_{2,i}(x)|^{2}dx <+\infty$. Then if we introduce $H_{\omega_i}^+=H_{\omega_i}|_{D(H_{\omega_i})\cap L^2(0,+\infty)}$, $\mathrm{dim}\ \mathrm{Ker}(H_{\omega_i}^+ -E)=1$. With $H_{\omega_i}^-=H_{\omega_i}|_{D(H_{\omega_i})\cap L^2(-\infty,0)}$, we also have $\mathrm{dim}\ \mathrm{Ker}(H_{\omega_i}^- -E)=1$. Then
$$\mathrm{dim}\ \mathrm{Ker}\left(\bigoplus_{i=1}^{N} H_{\omega_i}^+ -E \right) =\mathrm{dim}\ \mathrm{Ker}\left(\bigoplus_{i=1}^{N} H_{\omega_i}^- -E \right)  =N.$$
As $V_0$ is bounded, from deficiency index theory we get 
$$\mathrm{dim}\ J_{+}(E) =\mathrm{dim}\ \mathrm{Ker}\left(\bigoplus_{i=1}^{N} H_{\omega_i}^+ +V_0 -E \right) = \mathrm{dim}\ \mathrm{Ker}\left(\bigoplus_{i=1}^{N} H_{\omega_i}^+ -E \right)=N$$
and the same goes for $J_{-}(E)$.
\end{proof}

\noindent From this proposition we get as in \cite{KS88}, Corollary $2.2$, the following result.

\begin{prop}\label{Fplus}
Let $E\in \C_{+}\cup \C_{-}$ and $\omega \in \tilde{\Omega}$. Then there exists a unique function  $x\mapsto F_{+}(x,E)$ with values in $\mathcal{M}_{\mathrm{N}}(\C)$ (respectively $x\mapsto F_{-}(x,E)$) satisfying  
$$H(\omega)F_{+}=EF_{+},\ F_{+}(0,E)=I,\ \mathrm{and}\ \int_{0}^{\infty} ||F_{+}(x,E)||^{2}dx <+\infty,$$
respectively  
$$H(\omega)F_{-}=EF_{-},\ F_{-}(0,E)=I,\ \mathrm{and}\ \int_{-\infty}^{0} ||F_{-}(x,E)||^{2}dx <+\infty.$$
\end{prop}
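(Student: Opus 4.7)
The plan is to leverage Proposition \ref{dimJ} and realize $F_{+}(\cdot, E)$ as the ``fundamental matrix'' whose columns form a distinguished basis of $J_{+}(E)$. Concretely, consider the evaluation map $\mathrm{ev}_{0}: J_{+}(E) \to \C^{N}$ defined by $\mathrm{ev}_{0}(f) = f(0)$. Once I show this is a linear bijection, I can pick for each $j \in \{1,\ldots,N\}$ the unique $\phi_{j} \in J_{+}(E)$ with $\phi_{j}(0) = e_{j}$, and set $F_{+}(x, E)$ to be the $\mathcal{M}_{N}(\C)$-valued function whose $j$th column is $\phi_{j}(x)$. Uniqueness then follows immediately: any other candidate $\tilde{F}_{+}$ would produce columns of $F_{+} - \tilde{F}_{+}$ lying in $\mathrm{Ker}(\mathrm{ev}_{0})$ and therefore vanishing identically.

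Since $\dim J_{+}(E) = N = \dim \C^{N}$, bijectivity of $\mathrm{ev}_{0}$ reduces to injectivity. Suppose $f \in J_{+}(E)$ satisfies $f(0) = 0$, and apply a Green-type identity on $[0,R]$ to $H(\omega) f = E f$. Integration by parts on each subinterval $(n, n+1)$, followed by summation over $n$ using the interface conditions $f_{i}'(n^{+}) - f_{i}'(n^{-}) = c_{i}\omega_{i}^{(n)} f_{i}(n)$, produces an identity in which the $V_{0}$-term and the interior jump contributions (which are of the form $c_{i}\omega_{i}^{(n)} |f_{i}(n)|^{2}$) are manifestly real. Taking imaginary parts yields
\begin{equation*}
-\mathrm{Im}\,\langle f'(R^{-}), f(R) \rangle_{\C^{N}} + \mathrm{Im}\,\langle f'(0^{+}), f(0) \rangle_{\C^{N}} = \mathrm{Im}(E)\int_{0}^{R} |f(x)|^{2}\,\dd x.
\end{equation*}

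The boundary term at $0$ is zero by hypothesis. To handle the term at $R$, I would combine $f \in L^{2}(0,\infty)$ with the equation $f'' = (V_{0} - E)f$ (valid on each open unit interval) and the boundedness of $V_{0}$ to obtain $f' \in L^{2}(0,\infty)$ via a standard one-dimensional interpolation estimate applied piecewise on the intervals $(n, n+1)$ and summed. This integrability then guarantees a sequence $R_{k} \to \infty$ along which $|f(R_{k})|^{2} + |f'(R_{k}^{-})|^{2} \to 0$, so passing to the limit in the identity above gives $\mathrm{Im}(E)\,\|f\|_{L^{2}(0,\infty)}^{2} = 0$. Since $\mathrm{Im}(E) \neq 0$, this forces $f \equiv 0$, proving injectivity of $\mathrm{ev}_{0}$. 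The construction of $F_{-}$ on $(-\infty, 0]$ is entirely symmetric.

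The main obstacle is performing the integration by parts cleanly in the presence of the point interactions: one must carefully track the jumps of $f'$ at every integer of $[0,R]$ and verify that they drop out upon taking imaginary parts. Closely related is the piecewise $L^{2}$-bound on $f'$ needed to kill the boundary term at $R$ along a subsequence; this is the mechanism that converts one-sided square-integrability of $f$ into the rigidity statement $\mathrm{Ker}(\mathrm{ev}_{0}) = \{0\}$ on which the whole argument hinges.
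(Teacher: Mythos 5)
Your proof is correct and follows essentially the same route as the paper, which itself simply defers to \cite{KS88}, Corollary~$2.2$: once $\dim J_{\pm}(E)=N$ is known (Proposition~\ref{dimJ}), one realizes $F_{\pm}$ by showing the evaluation map at $0$ is a linear isomorphism from $J_{\pm}(E)$ onto $\C^{N}$, and injectivity is exactly the self-adjointness/Green's-identity argument you spell out, with the interface jumps contributing only real terms. You have merely made explicit what the paper leaves to the cited reference; the content is the same.
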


\noindent This proposition is the starting point of all the theory on the Floquet exponent $w$ and the $M_{\pm}$-functions presented in \cite{KS88}. Thus Kotani's theory on the absolutely continuous spectrum apply for $H(\omega)$ and it finishes the proof of theorem \ref{th1}. We also recall that \cite{KS88} only considers $\R$-ergodic operators whereas $H(\omega)$ is $\Z$-ergodic. To avoid this difficulty we can refer to the suspension procedure developed by Kirsch in \cite{kirsch}. This procedure allows us to construct from $H(\omega)$ an operator $\hat{H}(\hat{\omega})$ define on a bigger probability space which is $\R$-ergodic. $\hat{H}(\hat{\omega})$ is also constructed in a way such that its IDS and Lyapunov exponents exist if and only if those of $H(\omega)$ exist and in this case they are equal for both operators.

\noindent Also, from the properties of the Floquet exponent $w$ combined with previous results of Kotani (see \cite{kotani83}) one can repeat the discussions of Sections $4.1$ and $4.2$ in \cite{boumazarmp} to prove the following Thouless formula for $H(\omega)$.

\begin{prop}\label{thouless}
For Lebesgue-almost every $E\in \R$ we have  
\begin{equation}\label{thoulesseq}
(\gamma_{1}+\ldots+\gamma_{N})(E)= -\alpha+\int_{\R} \log \left( \left| \frac{E'-E}{E'-\mathrm{i}}\right| \right)~\dd \mathfrak{n}(E')
\end{equation}
where $\alpha$ is a real number independent of $E$ and $\mathfrak{n}$ is the density of states of $H(\omega)$. Moreover, if  $I\subset \R$ is an interval on which $E\mapsto (\gamma_{1}+\ldots +\gamma_{N})(E)$ is continuous then (\ref{thoulesseq}) holds for every $E\in I$.
\end{prop}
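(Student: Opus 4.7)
The plan is to adapt Kotani's derivation of the Thouless formula, as developed in \cite{kotani83} and carried out for matrix-valued Anderson models in \cite{boumazarmp}. The central object is the Floquet exponent $w$ built from the matrix-valued Weyl--Titchmarsh functions attached to the fundamental solutions $F_{\pm}$ of Proposition~\ref{Fplus}. Since $H(\omega)$ is only $\Z$-ergodic, I first pass to the $\R$-ergodic suspension $\hat H(\hat\omega)$ via the procedure of \cite{kirsch}, which leaves both the IDS $\mathfrak{n}$ and the sum of Lyapunov exponents unchanged.

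First, I would establish that $w$ is a Herglotz function $\C_+ \to \overline{\C_+}$ with the Nevanlinna-type representation
\[
w(E) \;=\; -\alpha + \int_\R \log\!\left(\frac{E'-E}{E'-\ii}\right) \dd\mathfrak{n}(E'),
\]
where the principal branch of the logarithm is used and $\alpha\in\R$ is a constant independent of $E$. This representation identifies the measure driving $w$ with $\mathfrak{n}$; it is a consequence of the Herglotz representation theorem together with the identification of $\mathrm{Im}\ w(E+\ii 0)$ with $\pi$ times the distributional derivative of the IDS, which in our setting is derived exactly as in \cite{kotani83} once $F_{\pm}$ and the associated matrix Green's function are in place.

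Second, I would invoke the matrix-valued Kotani identity $\mathrm{Re}\ w(E+\ii 0) = -(\gamma_1+\ldots+\gamma_N)(E)$, valid for Lebesgue-almost every $E\in\R$. This rests on Wronskian-type computations on $F_{\pm}$ together with the Oseledets theorem, which supplies the exponential decay/growth rates of the square-integrable sections along each half-line and relates them to $\gamma_1,\ldots,\gamma_N$. Taking real parts of the Nevanlinna representation at $E+\ii 0$ then yields (\ref{thoulesseq}) for a.e. $E$. For the continuity clause, I note that the right-hand side of (\ref{thoulesseq}) is continuous in $E$ wherever $\mathfrak{n}$ has no atom (as $\log|E'-E|$ is locally $\mathfrak{n}$-integrable by finiteness of $\mathfrak{n}$ and the logarithmic integrability bound). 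On any interval $I$ where $\gamma_1+\ldots+\gamma_N$ is continuous, both sides are continuous functions that coincide almost everywhere, hence coincide on all of $I$.

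The main obstacle is the rigorous derivation of Kotani's identity in the matrix-valued point-interaction setting: one must verify that the Wronskian manipulations and the asymptotic behavior of $F_{\pm}$ are compatible with the interface relations (\ref{solsystem2}) at integer points, since the usual continuous-variable Wronskian identities are only valid on $\R\setminus\Z$ and pick up correction terms through the jump conditions for $f'$. Once these correction terms are shown to telescope (using the fact that $f$ itself stays continuous across integers), the rest of the proof is Herglotz bookkeeping exactly as in \cite{boumazarmp}.
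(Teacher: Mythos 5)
Your route is essentially the paper's: pass to the $\R$-ergodic suspension of \cite{kirsch}, establish the Herglotz/Nevanlinna structure of the Floquet exponent $w$ built from the $F_{\pm}$ of Proposition~\ref{Fplus}, equate $\mathrm{Re}\, w(E+\ii 0)$ with minus the sum of positive Lyapunov exponents and $\mathrm{Im}\, w$ with the IDS density, and then read off \eqref{thoulesseq}. The paper condenses this to ``repeat Sections~4.1 and~4.2 of \cite{boumazarmp}'' once Propositions~\ref{dimJ} and~\ref{Fplus} are in place, and you have correctly identified that the only genuinely new verification is exactly those two propositions plus the fact that the interface conditions \eqref{solsystem2} do not spoil the Wronskian computations behind the $M_{\pm}$-function theory.

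One small imprecision in the last paragraph: the right-hand side of \eqref{thoulesseq} is in general only upper semicontinuous, not continuous ``wherever $\mathfrak{n}$ has no atom'' --- non-atomicity of a finite measure does not make its logarithmic potential continuous (singular continuous measures can make it drop to $-\infty$ on a dense set). The actual mechanism is subharmonicity: the sum $\gamma_1+\cdots+\gamma_N$ extends to a subharmonic function on $\C_+$ whose $\text{u.s.c.}$-regularized boundary trace coincides everywhere (not just a.e.) with the log-potential on the right of \eqref{thoulesseq}; on an interval $I$ where $\gamma_1+\cdots+\gamma_N$ is continuous it agrees with its own regularization, which gives the pointwise equality on $I$ (and, a posteriori, continuity of the right-hand side there). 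Replacing your ``both sides are continuous'' sentence with this regularization argument makes the ``Moreover'' clause rigorous; the rest of the proposal matches the paper.
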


\noindent Using this Thouless formula, theorem \ref{th2} and properties of the Hilbert transform (see \cite{neri}), we can obtain the proof of theorem \ref{th4} exactly in the same way as it was done at theorem $4$ of \cite{boumazarmp}, Section $4.3$. Then applying theorem \ref{th4} on any compact interval $I\subset \R \setminus \mathcal{S}$ where $\mathcal{S}$ is given in theorem \ref{th1}, we prove theorem \ref{th5}.

\vskip 2mm

\noindent As we can see, the methods to prove regularity of Lyapunov exponents and regularity of the IDS for the point interactions model $H(\omega)$ are completely similar to those for matrix-valued Anderson models. The main differences are to be found in the proof of the Zariski denseness of the F\"urstenberg group. Indeed, in \cite{boumazampag} we proved Zariski denseness of the F\"urstenberg group of an Anderson operator acting on $L^2(\R)\otimes \C^2$ using algebraic technics different than those used in the proof of theorem \ref{th1} in the present paper.




\end{document}